\documentclass[pdftex,pagesize=auto,paper=letter,11pt,abstract=on,bibliography=oldstyle]{scrartcl}

\usepackage[utf8]{inputenc}
\usepackage[T1]{fontenc}
\usepackage[english]{babel}
\usepackage{amsthm}
\usepackage{amsmath}
\usepackage{amssymb}
\usepackage{hyperref}
\usepackage{graphicx}
\usepackage{xspace}
\usepackage{xcolor}
\usepackage{authblk}
\usepackage{multirow}
\usepackage{booktabs}
\usepackage{paralist}
\usepackage{caption}
\usepackage{subcaption}
\usepackage{tikz}

\usepackage[linesnumbered,noend,ruled]{algorithm2e}
\DontPrintSemicolon

\newtheorem{lemma}{Lemma}

\DeclareMathOperator{\weight}{len}
\DeclareMathOperator{\cons}{cons}

\newcommand{\lef}{\ell}
\newcommand{\rig}{r}

\newcommand{\eg}{e.\,g.\xspace}
\newcommand{\ie}{i.\,e.\xspace}
\newcommand{\wrt}{wrt.\xspace}

\newcommand{\graph}{\ensuremath{G}}
\newcommand{\vertices}{\ensuremath{V}}
\newcommand{\edges}{\ensuremath{E}}
\newcommand{\vertex}{\ensuremath{v}}
\newcommand{\vertexa}{\ensuremath{u}}
\newcommand{\vertexb}{\ensuremath{v}}

\newcommand{\edge}{\ensuremath{e}}

\newcommand{\reals}{\ensuremath{\mathbb{R}}}
\newcommand{\posreals}{\ensuremath{\reals_{\ge 0}}}

\newcommand{\apath}{\ensuremath{\pi}\xspace}
\newcommand{\source}{\ensuremath{s}}

\newcommand{\planar}{\ensuremath{p}}

\newcommand{\distancelabel}{\ensuremath{d}}
\newcommand{\consumptionlabel}{\ensuremath{c}}

\newcommand{\bigO}{\ensuremath{\mathcal{O}}}
\usepackage{complexity}
\newcommand{\opt}{\mathrm{OPT}}

\newcommand{\range}{\ensuremath{r}}

\newcommand{\algoExtractReachableComponent}{\texttt{RP-RC}\xspace}
\newcommand{\algoTriangularSeparators}{\texttt{RP-TS}\xspace}
\newcommand{\algoConnectComponents}{\texttt{RP-CU}\xspace}
\newcommand{\algoSelfIntersections}{\texttt{RP-SI}\xspace}

% Title stuff.

\newcommand{\email}[1]{\texttt{#1}}
\title{Scalable Isocontour Visualization in\\Road Networks via Minimum-Link Paths\thanks{Partially supported by the EU FP7 under grant agreement no.\ 609026 (project MOVESMART)}}
\author[1]{Moritz Baum}
\author[1,2]{Thomas Bläsius}
\author[1]{Andreas Gemsa}
\author[1]{Ignaz Rutter}
\author[1]{Franziska Wegner}
\affil[1]{Karlsruhe Institute of Technology, Germany\\\email{firstname.lastname@kit.edu}}
\affil[2]{Hasso Plattner Institute, Germany\\\email{thomas.blaesius@hpi.de}}
\date{}

\begin{document}

\maketitle

\begin{abstract}
Isocontours in road networks represent the area that is reachable from a source within a given resource limit. We study the problem of computing accurate isocontours in realistic, large-scale networks.
We propose polygons with minimum number of segments that separate reachable and unreachable components of the network.
Since the resulting problem is not known to be solvable in polynomial time, we introduce several heuristics that are simple enough to be implemented in practice. A key ingredient is a new practical linear-time algorithm for minimum-link paths in simple polygons.
Experiments in a challenging realistic setting show excellent performance of our algorithms in practice, answering queries in a few milliseconds on average even for long ranges.
%
%An extensive experimental study on realistic input reveals that our algorithms perform  excellently in practice, answering queries in well below 100 milliseconds on average even for large ranges.
%
%In particular, the computation of minimum-link paths requires less than 10 milliseconds even on polygons with hundreds of thousands of segments.
%
%Moreover, all heuristics provide solutions of almost optimal descriptive complexity, allowing for fast rendering in practice. Thereby, our algorithms clearly outperform the state-of-the-art and enable even interactive applications for isocontour visualization.
%
\end{abstract}

%%%%%%%%%%%%%%%%%%%%%%%%%%%%%%%%%%%%%%%%%%%%%%%%%%%%%%%%%%%%%%%%%%%%%%%%%%%%%%%%
\section{Introduction}
\label{sec:introduction}
%%%%%%%%%%%%%%%%%%%%%%%%%%%%%%%%%%%%%%%%%%%%%%%%%%%%%%%%%%%%%%%%%%%%%%%%%%%%%%%%

% How far can I drive with my battery electric vehicle, given my
% position and the current state of charge?  -- Due to limited battery
% capacities and a sparse charging infrastructure, visualizing the
% reachable area can help the user in finding charging stations in range
% and to overcome range anxiety (the fear of getting
% stranded). The preferable answer to this type of
% question is a map that visualizes the reachable region.  

How far can I drive my battery electric vehicle, given my
position and the current state of charge?  -- This
question expresses range anxiety (the fear of getting
stranded) caused by limited battery capacities and sparse
charging infrastructure.  An answer in the form of a map that
visualizes the reachable region helps to find charging stations in
range and to overcome range anxiety.  This reachable region is bounded
by curves that represent points of constant energy consumption; such
curves are usually called \emph{isocontours} (or \emph{isolines}).
Isocontours are typically considered in the context of functions
$f\colon \mathbb R^2 \to \mathbb R$, e.g., if~$f$ describes the
altitude in a landscape, then the terrain can be visualized by showing
several isocontours (each representing points of constant altitude).
In our setting, $f$ would describe the energy necessary to reach a
certain point in the plane.  However, $f$ is actually defined only for
a discrete set of points, namely for the vertices of the road network.
Thus, we have to fill the gaps by deciding how the isocontour should
pass through the regions between the roads.  The fact that the quality
of the resulting visualization heavily depends on these decisions
makes computing isocontours in road networks an interesting
algorithmic problem.

Formally, we consider the road network to be given as a directed graph
$\graph = (\vertices,\edges)$,
%with $|V| = n$ vertices and $|E| = m$ edges
along with vertex positions in the plane and two weight
functions \mbox{$\weight \colon E \to \mathbb{R}^+$} and
$\cons \colon E \to \mathbb{R}$ representing length and resource
consumption, respectively.  For a source
vertex~$\source \in \vertices$ and a range~$\range \in \posreals$, a
vertex $v$ belongs to the \emph{reachable subgraph} if the shortest
path from $s$ to $v$ has a total resource consumption of at most~$\range$,
i.e., shortest paths are computed according to the length, while
reachability is determined by the consumption.
Coming back to our initial question concerning the electric vehicle,
the source vertex is the initial position, the range is the current
state of charge, the length corresponds to travel time, and energy is
the resource consumed on edges.  We allow negative resource
consumption to take account of recuperation, which enables electric
vehicles to recharge their battery when braking.  Note that our
setting is sufficiently general to allow for other applications.  By
setting the length as well as the resource consumption of edges to the
travel time ($\weight \equiv \cons$), one obtains the special case of
\emph{isochrones}.  There is a wide range of applications for
isochrones, including reachability analyses~\cite{Bau08,Gam11,Gam12},
geomarketing~\cite{Efe13}, and various online
applications~\cite{Inn13}.
Known approaches focus on isochrones of small or medium
range. But isochrones can be useful in more challenging scenarios, for example, to visualize the area reachable by a truck driver within a day of work. This motivates our work on
fast isocontour visualization.

Our algorithms for computing isocontours in road networks
are guided by three major objectives.  The isocontours must be exact
in the sense that they correctly separate the reachable subgraph from
the remaining \emph{unreachable subgraph}; the isocontours should be
polygons of low \emph{complexity} (\ie, consist of few segments, enabling efficient rendering and a clear, uncluttered visualization); and the algorithms should be sufficiently
fast in practice for interactive applications, even on large inputs of continental scale.

\begin{figure}[t]
\centering
%\includegraphics[width=0.48\textwidth]{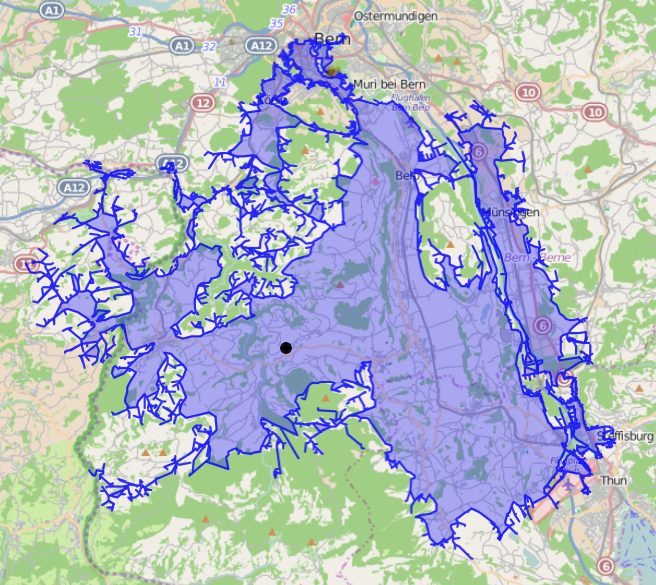}%
%\hfill%
%\includegraphics[width=0.48\textwidth]{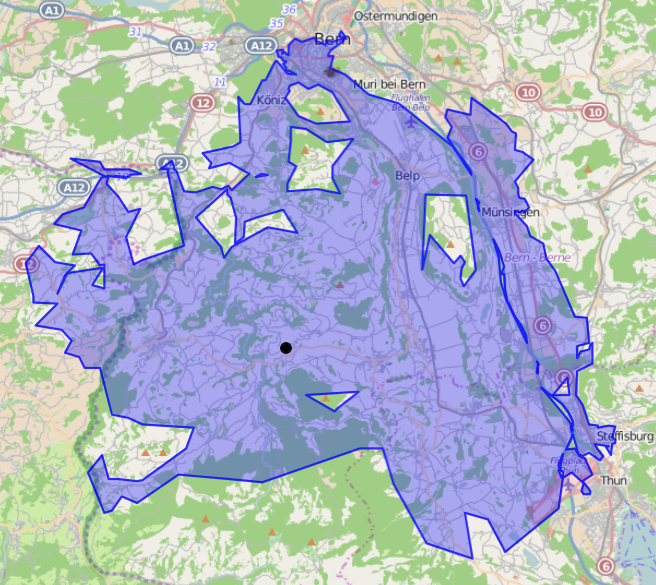}%
\begin{subfigure}{.48\textwidth}
\includegraphics[width=\textwidth]{fig/range_corr_holes.png} 
\caption{}
\end{subfigure}\hfill
\begin{subfigure}{.48\textwidth}
\includegraphics[width=\textwidth]{fig/range_poly_holes.png} 
\caption{}
\end{subfigure}
\caption{Real-world example of isocontours in a mountainous area (near
  Bern, Switzerland). Both figures visualize range of an electric
  vehicle positioned at the black disk with a state of charge
  of~2\,kWh. Note that the polygons representing the isocontour
  contain holes, due to unreachable high-ground areas. (a) The isocontour computed by the approach in Section~\ref{sec:heuristics:reachable_component_extraction}, which resembles previous works~\cite{Mar10}. (b) The result of one of
  our new approaches, presented in Section~\ref{sec:heuristics:connected_components}.}
\label{fig:case_study_polygon_with_holes}
\end{figure}

Figure~\ref{fig:case_study_polygon_with_holes} shows an example of
isocontours visualizing the range of an electric vehicle. The left
figure depicts a polygon that closely resembles the output of
isocontour algorithms considered state-of-the-art in recent
works~\cite{Efe13,Mar10}. Unfortunately, the number of segments
becomes quite large even in this medium-range example (more than 10\,000 segments). The figure to the right shows the result of our
approach presented in
Section~\ref{sec:heuristics:connected_components}, which contains the
same reachable subgraph while using 416 segments in total.

\paragraph{Related Work.}

Several existing algorithms consider the problem of computing the subnetwork that is reachable within a given timespan.
%Bauer et al. introduce isochrone in multi-modal spatial networks~\cite{Bau08}.
The MINE algorithm~\cite{Gam11} is a search to compute isochrones in transportation networks based on Dijkstra's well-known algorithm~\cite{d-ntpcg-59}. An improved variant, called MINEX~\cite{Gam12}, reduces space requirements. Both approaches work on spatial databases, prohibiting interactive applications (having running times in the order of minutes for large ranges).
To enable much faster shortest-path queries in practice, \emph{speedup techniques}~\cite{Bas14} seperate the workflow into an offline preprocessing phase and an online query phase. Recently, some speedup techniques were extended to the isochrone scenario~\cite{bbdw-fcirn-15,ep-grasp-14}, enabling query times in the order of milliseconds.
Yet, these approaches only deal with the computation of the reachable subgraph, rather than visualization of isocontours.
%
%Moreover, they only cover the special case of isochrones, \ie, equivalent length and resource consumption functions.

Regarding isocontour visualization, efficient approaches exist for shape characterization of point sets, such as $\alpha$-shapes~\cite{Ede82} or $\chi$-shapes~\cite{Duc08}. However, we are interested in separating subgraphs rather than point sets.
Marciuska and Gamper~\cite{Mar10} present two approaches to visualize isochrones. The first  transforms a given reachable network into an isochrone area by simply drawing a buffer around all edges in range. The second one creates a polygon without holes, induced by the edges on the boundary of the embedded reachable subgraph.
Both approaches were implemented on top of databases, providing running times that are too slow for many applications (several seconds for small and medium ranges).
Gandhi~et~al.~\cite{Gan07} introduce an algorithm to approximate isocontours in sensor networks with provable error bounds. They preserve the topological structure of the given family of contours, forbidding intersections.
Finally, there are different works presenting applications which make use of isocontours in the context of urban planning~\cite{Bau08}, geomarketing with integrated traffic information~\cite{Efe13}, and range visualization for electric vehicles~\cite{Gru14}.

\paragraph{Contribution and Outline.}

We propose algorithms for efficiently computing
polygons representing isocontours in road networks. All approaches
compute isocontours that are exact, \ie, they contain exactly
the subgraph reachable within the given resource
limit, while having low descriptive complexity.
%
%enabling efficient rendering and leading to a clear, uncluttered
%visualization.
%
Efficient performance
of our techniques is both proven in theory and demonstrated in
practice on realistic instances.
%
%Thereby, we complement previous works on fast computation of reachable subgraphs~\cite{bbdw-fcirn-15,ep-grasp-14}.

In Section~\ref{sec:problem_statement}, we formalize the notion of reachable and unreachable subgraphs. Moreover, we state the precise problem and outline our algorithmic approach to solve it.
Section~\ref{sec:range_query} attacks the first resulting subproblem of computing \emph{border regions}, that is, polygons that represent the boundaries of the reachable and unreachable subgraph.
An isocontour must separate these boundaries.
In Section~\ref{sec:min_link_path}, we consider the special case of separating two hole-free polygons with a polygon with minimum number of segments. While this problem can be solved in $O(n \log n)$ time~\cite{Wan91},
%
%(where $n$ is the total number of segments of both input polygons)
%
we propose a
simpler algorithm that uses at most two additional segments, runs in linear time, and requires a single run of a minimum-link path algorithm. We also propose a minimum-link path algorithm that is simpler than a previous approach~\cite{Sur86}.
Section~\ref{sec:heuristics} extends these results to the general case, where border regions may consist of more than two components. Since the complexity of the resulting problem is unknown, we focus on efficient heuristic approaches that work well in practice, but do not give guarantees on the complexity of the resulting range polygons.
Section~\ref{sec:experiments} contains our extensive experimental evaluation using a large, realistic input instance. It demonstrates that all approaches are fast enough even for use in interactive applications.
We close with final remarks in Section~\ref{sec:conclusion}.

\section{Problem Statement and General Approach}
\label{sec:problem_statement}
%%%%%%%%%%%%%%%%%%%%%%%%%%%%%%%%%%%%%%%%%%%%%%%%%%%%%%%%%%%%%%%%%%%%%%%%%%%%%%%%

% Below, Section~\ref{sec:problem_statement:modeling} handles modeling issues and presents the formal problem statement. Computing contour lines in road networks requires two steps. First, we describe how to compute the vertices in range in Section~\ref{sec:problem_statement:range_query}. Second, we present a basic algorithm to compute the actual isocontours in Section~\ref{sec:problem_statement:contour_lines}. It computes the same output as the approach of Marciuska et al.~\cite{Mar10}, however, with better asymptotic running time.

%\subsection{Model}\label{sec:problem_statement:modeling}

Let $\graph=(\vertices,\edges)$ be a road network, which we consider as a geometric
network where vertices have a fixed position in the plane and edges
are represented by straight-line segments between their endpoints.  A
source $\source \in \vertices$ and a maximum range $\range$ together
partition the network into two parts, one that is within range $\range \in \posreals$
from~$v$, and the part that is not.  An isocontour separates these two
parts.  We are interested in visualizing such isocontours efficiently.  In
the following, we give a precise definition of the (un)reachable parts
of the network and formally define \emph{range polygons}, which we use
to represent isocontours.

\paragraph{Range Polygons.}

A path $\apath$ in $\graph$ starting at $\source$ is \emph{passable} if the consumption
of $\apath$, \ie, the sum of its edge consumption values, is at most~$\range$.  A vertex $\vertex$ is \emph{reachable}
(with respect to the maximum range $\range$) if the shortest
$\source$--$\vertex$-path is passable.  A vertex that is not reachable is
\emph{unreachable}.
For edges the situation is more complicated.  We partition the edges
into four types, namely unreachable edges, boundary edges, accessible
edges, and passable edges. Figure~\ref{fig:reachability_categories}
shows an example of the different edge types. If both endpoints of an edge $(u,v)$ are
unreachable, then also the edge $(u,v)$ is \emph{unreachable}.  If exactly
one endpoint is reachable, then $(u,v)$ is not part of the reachable
network, and we call it \emph{boundary edge}.  However, the fact that
both~$u$ and~$v$ are reachable does not necessarily imply that
$(u,v)$ is part of the reachable network.  Let $\apath_u$ and $\apath_v$
denote the shortest paths from $s$ to $u$ and $v$, respectively.  If
the resource consumptions of $\apath_u$ and $\apath_v$ do not allow the
traversal of the edge $(u,v)$ in either direction, i.e.,
$\cons(\apath_u) + \cons((u,v)) > \range$ and $\cons(\apath_v) + \cons((v,u)) > \range$,
then we do not consider $(u,v)$ as reachable.  Since we can reach
both endpoints of $(u,v)$, we call it \emph{accessible}.  Otherwise, the edge can be traversed in at least one direction,
so it is \emph{passable}.

Let $\vertices_r$ be the reachable vertices and let
$\vertices_u = \vertices \setminus \vertices_r$ be the unreachable
vertices.  Similarly, let $\edges_u, \edges_b, \edges_a, \edges_r$
denote the set of unreachable edges, boundary edges, accessible edges,
and passable edges, respectively.
Note that for arbitrary pairs of edges~$(u,v)$ and $(v,u)$, both edges belong to the same set.
The reachable part of the network
is~$\graph_r = (\vertices_r,\edges_r)$, and the unreachable part
is~$\graph_u = (\vertices_u,\edges_u)$.  A \emph{range polygon} is a
plane (not necessarily simple) polygon $P$ separating $\graph_r$ and
$\graph_u$ in the sense that its interior contains~$\graph_r$ and has
empty intersection with~$\graph_u$.  Note that every range polygon $P$
intersects each boundary edge an odd number of times and each
accessible edge an even number of times.  In particular, an accessible
edge may be totally or partially contained in the interior of a range
polygon.

\begin{figure}[t]
  \centering
  \includegraphics{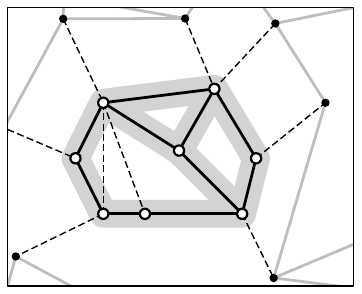}
  \caption{Graph with reachable (white) and unreachable (black) vertices. Black edges are passable. Dashed edges are boundary edges if they have an unreachable endpoint, and accessible if both endpoints are reachable. Gray edges are unreachable.}
  \label{fig:reachability_categories}
\end{figure}

If the input graph~$\graph$ is planar, one can construct a range
polygon by slightly shrinking the faces of the subgraph induced by all
reachable vertices, though this may produce many holes; see the shaded
area in Figure~\ref{fig:reachability_categories}.  However, if~$\graph$
is not planar, a range polygon may not even exist. If a passable edge
intersects an unreachable edge, the requirements of including the
passable and excluding the unreachable edge obviously contradict.  To
resolve this issue, we consider the planarization $\graph_\planar$ of
$\graph$, which is obtained from $\graph$ by considering each
intersection point $p$ as a \emph{dummy vertex} that subdivides all
edges of $\graph$ that contain~$p$.  We transfer the above partition
of $\graph$ into reachable and unreachable parts to $\graph_\planar$
as follows.  A dummy vertex is \emph{reachable} if and only if it
subdivides at least one passable edge of the original graph.  As
above, an edge of $\graph_\planar$ is unreachable if both endpoints
are unreachable, and it is a boundary edge if exactly one endpoint is
reachable.  If both endpoints are reachable, it is accessible
(passable) if and only if the edge in $G$ containing it is
accessible (passable).  Clearly, after the planarization, a range
polygon always exists. Figure~\ref{fig:crossing_edges} shows
different cases of crossing edges.  Note that this way of handling
crossings ensures that a range polygon for $\graph_\planar$ contains
the reachable vertices of $\graph$ and excludes the unreachable
vertices of~$\graph$.  However, unreachable edges of~$\graph$ may be
partially contained in the range polygon if they cross passable edges.
Finally, to avoid special cases, we add a bounding box of dummy
vertices and edges to~$\graph_\planar$, connecting each vertex in the bounding box to its respective closest vertex in~$\graph_\planar$ with an edge of infinite length. Thereby, we ensure that neither the
reachable nor the unreachable subgraph is empty, as the
reachable (unreachable) subgraph contains at least the source (the bounding box).

\begin{figure}[t]
  \centering
  \includegraphics{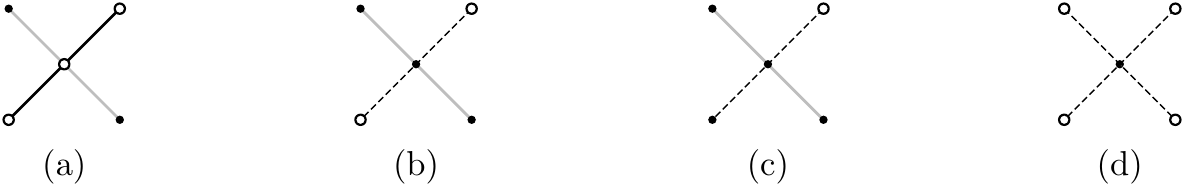}
  \caption{(a) Intersection of a passable and an unreachable edge, the dummy vertex is reachable. (b) Intersection of an accessible and an unreachable edge. The dummy vertex is unreachable, dashed edges become boundary edges. (c) Intersection of an unreachable edge and a boundary edge, creating an unreachable dummy edge and a new boundary edge replacing the original one. (d) An intersection of accessible edge creates an unreachable dummy vertex and four new boundary edges. }
  \label{fig:crossing_edges}
\end{figure}

\paragraph{General Approach.}

We seek to compute a range polygon with respect to the planarized
graph $G_p$ that has the minimum number of holes, and among these
we seek to minimize the complexity of the range polygon, \ie, its
number of segments.  We note that using $G_p$ may require more holes
than $G$ (see also Figure~\ref{fig:crossing_edges}d), but guarantees the existence of a solution.

Consider the graph $\graph'$ consisting of the union of the reachable
and the unreachable graph.  Clearly, all segments of the range
polygon lie in faces of $\graph'$.  A face of $\graph'$ that is
incident to both reachable and unreachable components is called
\emph{border region}.  Since a range polygon separates the reachable
and unreachable parts, each border region contains at least one
connected component of a range polygon.  Therefore, the number of
border regions is a lower bound on the number of holes.  On the other
hand, components in faces that are not border regions can be removed and several connected components in the
same border region can always be merged, potentially at
the cost of increasing the complexity; see Figure~\ref{fig:holes}.  Therefore, a range polygon
with the minimum number of holes (with respect to $\graph_\planar$)
can be computed as follows.
\begin{compactenum}
\item Compute the reachable and unreachable parts of~$\graph$.
\item Planarize $\graph$, compute the reachable and unreachable parts of
  the planarization~$\graph_\planar$.
\item Compute the border regions.
\item For each border region~$B$, compute a simple polygon of minimum
  complexity that is contained in $B$ and separates the unreachable
  components incident to $B$ from the reachable component.
\end{compactenum}
In the following sections we discuss several alternative
implementations for these steps.  The first three steps are described
together in Section~\ref{sec:range_query}.
\begin{figure}[t]
  \centering
  \includegraphics[page=1]{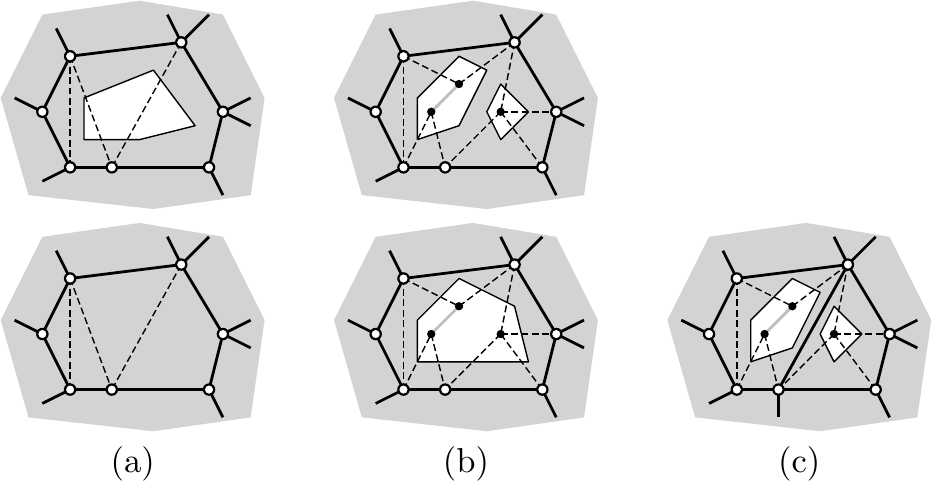}
  \caption{(a) A hole of the range polygon that contains no unreachable
    vertices can always be removed.  (b) Two holes that
    can be merged into one as they lie in the same border region.  (c) These two holes cannot be merged as they are separated by passable edges.}
  \label{fig:holes}
\end{figure}
The main part of the paper is concerned with
Step 4.  Each connected component of the boundary of a border region
is a hole-free non-crossing polygon.  Note that these polygons are not
necessarily simple in the sense that they may contain the same segment
twice in different directions; see Figure~\ref{fig:border_region}.
Thus, each border region is defined by two sets $R$ and $U$ of
hole-free non-crossing polygons, where $R$ contains the boundaries of the
reachable components and $U$ contains the boundaries of the unreachable
components.  For our range polygon, we seek a simple polygon with the
minimum number of links that separates $U$ from~$R$.  This problem has
been previously studied.  Guibas et al.~\cite{Gui93} showed that the
problem is \NP-complete in general.  In our case, however, it is
$|R| = 1$ since the reachable part of the network is, by definition,
connected.  Guibas et al.~\cite{Gui93} left this case as an open
problem, and, to the best of our knowledge, it has not been resolved.

\begin{figure}[t]
  \centering
  \includegraphics[page=1]{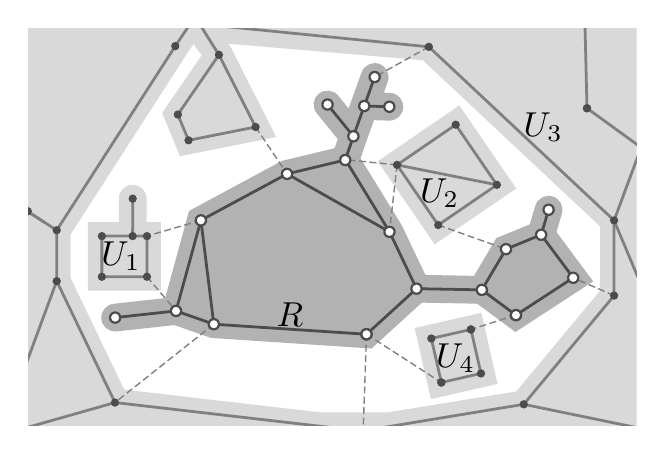}
  \caption{An example of a border region where the set~$U$ consists of several unreachable components with boundaries~$U_1$,~$U_2$,~$U_3$ and~$U_4$. Shaded areas show the reachable (dark gray) and unreachable (light gray) part with boundaries~$R$ and~$U$ of~$\graph$, respectively. The white area represents the border region.}
  \label{fig:border_region}
\end{figure}

In Section~\ref{sec:min_link_path}, we first consider border regions
that are incident to only one unreachable components, i.e.,
$|R| = |U| = 1$.  In this case, a polygon with the minimum number of
segments that separates $R$ and $U$ can be found in $O(n \log n)$
time (where $n$ is the total number of segments in the border region) using
the algorithm of Wang~\cite{Wan91}.  However,
this algorithm is rather involved, as it requires a (constant) number
of calls of a minimum-link path algorithm.  Instead, we propose a
simpler algorithm that uses at most two more segments than the
optimum, runs in linear time, and relies on a single call of a
minimum-link path algorithm.  In addition, we give a new linear
minimum link path algorithm that is simpler than previous algorithms
for this problem.
In Section~\ref{sec:heuristics}, we consider the general case of our
problem, where a border region may be incident to more than one
unreachable component.  We propose several algorithms for this
problem.  As mentioned above, the complexity of this problem is
unknown, and we therefore focus on efficient heuristic approaches that
work well in practice but do not necessarily give provable guarantees
on the number of segments.
%
%Since the exact information on the border region required by our
%algorithms differ, we adapt step 3 for each approach to maximize its
%efficiency.  We therefore present the details of step 3 together with
%the respective algorithms.

%%%%%%%%%%%%%%%%%%%%%%%%%%%%%%%%%%%%%%%%%%%%%%%%%%%%%%%%%%%%%%%%%%%%%%%%%%%%%%%%
\section{Computing the Border Regions}
\label{sec:range_query}
%%%%%%%%%%%%%%%%%%%%%%%%%%%%%%%%%%%%%%%%%%%%%%%%%%%%%%%%%%%%%%%%%%%%%%%%%%%%%%%%

We describe an algorithm to compute the reachable and unreachable part of the input graph~$\graph = (\vertices, \edges)$, the planarization of these parts, and extraction of all border regions.
We modify Dijkstra's algorithm~\cite{d-ntpcg-59} to compute the reachable and unreachable parts in time $\bigO(|\vertices| \log |\vertices|)$, given that~$|\edges| \in \bigO(|\vertices|)$ for graphs representing road networks. Afterwards, we map this information to the planarized graph~$\graph_\planar$ in~$\bigO(|E|)$ time. Finally, we extract the border regions by traversing all faces of the planar graph that contain at least one boundary edge or accessible edge, which requires linear time in the size of all border regions. Below, we discuss efficient implementations of these three steps.
First, we describe a variant of Dijkstra's algorithm that computes, for a given source vertex~$\source$ and a range~$\range \in \posreals$
\begin{itemize}
\item for each vertex $\vertex \in \vertices$, whether~$\vertex$ is reachable from~$\source$;
\item the set~$\edges_x := \edges_b \cup \edges_a$ of all boundary edges and accessible edges;
\item for every edge $(\vertexa,\vertexb) \in \edges \setminus \edges_x$, whether $(\vertexa, \vertexb)$ is passable or unreachable.
\end{itemize}
For the sake of simplicity, we assume that shortest paths are unique
(\wrt the length function). Thereby, we avoid the special case of two
paths with equal distance but different consumption, which requires
additional tie breaking.
The algorithm works as follows.
Along the lines of Dijkstra's algorithm, it maintains vertex labels
consisting of (tentative) values for distance $\distancelabel(\cdot)$
and resource consumption $\consumptionlabel(\cdot)$, both initially
set to $0$ for~$\source$ and~$\infty$ for all other
vertices. The algorithm uses a priority queue of
vertices, initially containing~$\source$. In each step, it extracts
the vertex~$\vertexa$ with minimum distance label from the queue,
thereby \emph{settling} it. Then, all outgoing edges
$(\vertexa,\vertexb)$ are scanned, checking whether
$\distancelabel(\vertexa) + \weight(\vertexa,\vertexb) <
\distancelabel(\vertexb)$. If this is the case, the labels
at~$\vertexb$ are updated accordingly to
$\distancelabel(\vertexb) := \distancelabel(\vertexa) +
\weight(\vertexa,\vertexb)$ and
$\consumptionlabel(\vertexb) := \consumptionlabel(\vertexa) +
\cons(\vertexa,\vertexb)$. Also,~$\vertexb$ is inserted into the queue
if it is not contained already. Once the queue runs empty, we know
that a vertex $\vertexb \in \vertices$ is reachable if and only if
$\consumptionlabel(\vertexb) \le \range$.  An edge
$(\vertexa,\vertexb) \in \edges \setminus \edges_x$ is passable if~$\vertexa$ (and thus, also~$\vertexb$) is reachable, and it is unreachable otherwise.
Correctness follows directly from the correctness of Dijktra's algorithm and the fact that we simply sum up consumption values along shortest paths.

We describe how to compute the set~$\edges_x$ of boundary edges and accessible edges, \ie, all edges that intersect the interior of the border regions, which are required as input for later steps. A na\"ive approach could run a linear sweep
over all edges after the search terminates.  In practice, we can do better by computing~$\edges_x$ on-the-fly (especially when the stopping criterion described below is applied). We know
whether an edge~$(\vertexa,\vertexb)$ belongs to~$\edges_x$ as
soon as both~$\vertexa$ and~$\vertexb$ were settled (and thus have
final labels). Therefore, after extracting a vertex from the queue, we
check all incident edges and add them to~$\edges_x$ if the
respective neighbor was settled and one of the following conditions holds. Either, exactly one endpoint of the edge is currently reachable, or both endpoints are reachable but the edge is not passable, \ie, $\consumptionlabel(\vertexa) + \cons((\vertexa,\vertexb)) > \range$ and $\consumptionlabel(\vertexb) + \cons((\vertexb,\vertexa)) > \range$ if $(\vertexb,\vertexa) \in \edges$.

\paragraph{Stopping Criterion.}

Without further modification, the described algorithm settles all vertices in the graph (presuming it is strongly connected). In practice, this may be undesirable, especially for small ranges. However, simply pruning the search at unreachable vertices does not preserve correctness. Assume we do not insert an unreachable vertex~$\vertexa$ into the queue. Consider another unreachable vertex~$\vertexb$, such that the shortest $\source$--$\vertexb$-path contains $\vertexa$. There might exist some (non-shortest) path from $\source$ to $\vertexb$ with lower resource consumption that is found by the algorithm instead. Then, $\vertexb$ is falsely identified as reachable. However, we can safely abort the query as soon as no reachable vertex is left in the queue, since no reachable vertex can be found from an unreachable vertex (in case of negative consumption values, we presume that battery constraints apply~\cite{Bau13}). To efficiently check whether this stopping criterion is fulfilled, we simply maintain a counter to keep track of the number of reachable vertices in the queue.
%Once it reaches zero, we can safely abort the query.

Using the stopping criterion, we may not have found all boundary edges once the search terminates, as possibly not all unreachable endpoints of boundary edges have been settled. However, we know that the unreachable endpoint~$\vertexb$ of missing boundary edges~$(\vertexa,\vertexb)$ must have been added to the queue when~$\vertexa$ was settled.
We ensure that unreachable vertices~$\vertexa$ of boundary edges~$(\vertexa,\vertexb)$ are in the queue as well, by scanning incident incoming edges when settling the reachable vertex~$\vertexa$ and adding each neighbor to the queue with key~$\infty$ if it is not contained yet.
We obtain the remaining boundary edges in a sweep over all vertices left in the queue, checking for each its incident edges.

Even with the stopping criterion in place, the running time is impractical on large inputs. Speedup techniques~\cite{Bas14} are a common approach for much faster queries in practice. A recent work~\cite{bbdw-fcirn-15} presents techniques extending the basic approach described above that enable fast computation of boundary edges (considering the special case of $\weight \equiv \cons$). Since they can be adapted to our scenario, we focus on the remaining steps for computing range polygons.

\paragraph{Planarization.}

%If the underlying road graph was planarized in a preprocessing step, we have to make sure that dummy vertices (as introduced in Section~\ref{sec:problem_statement}) are handled correctly (according to our model). One approach is to work with two different graphs: the input graph $\graph = (\vertices, \edges)$ is used for Dijkstra's algorithm, while the planarized graph $\graph_\planar = (\vertices_\planar, \edges_\planar)$ is used afterwards to compute isocontours.

We planarize~$\graph$ in a preprocessing step to obtain~$\graph_\planar = (\vertices_\planar, \edges_\planar)$. This can be done using the well-known sweep line algorithm~\cite{Bentley:1979:ARC:1311099.1311743,Berg:2008:CGA:1370949}.
%
%Now, we describe
%how to efficiently transfer this information to the
%planarization~$\graph_\planar$, which is the input of subsequent steps
%for computing isocontours.
%
For each vertex in~$\graph_\planar$, we store its original vertex in~$\graph$ (if it exists).
In practice, where vertices are represented by indices~$\{1, \dots, |\vertices|\}$, this mapping can be done implicitly since $\vertices_\planar$ is a superset of~$\vertices$.
%
%this mapping can easily be implemented in practice by reusing the indices of~$\vertices$ in the data structure of~$\vertices_\planar$.
%
%Similarly, we store for every edge in~$\graph_\planar$ its corresponding original edge in~$\edges$ (for edges in~$\edges_\planar$ not present in~$\edges$, we store the unique original edge containing it).

After computing the reachable subgraph of~$\graph$ as described above, we compute reachability of dummy vertices and the set~$\edges_x$ of boundary edges and accessible edges in~$\graph_\planar$ as follows.
First, we have to ensure that boundary edges and accessible edges returned by the algorithm are actually contained in~$\graph_\planar$. We add a flag to all edges in~$\graph$ during preprocessing that indicates whether an edge is contained in~$\graph_\planar$. Then, we modify our search algorithm to add edges to~$\edges_x$ only if this flag is set.
After the search terminates, we check for each dummy vertex~$\vertex \in \vertices_\planar$ whether it is reachable, by checking passability of all original edges in~$\edges$ that contain~$\vertex$. To this end, we precompute an array of all original edges that were split during planarization, and for each split edge a list of dummy vertices it contains (an original edge may intersect several other edges). Then, we can sweep over this array of edges, and for each passable edge, we mark all its dummy vertices as reachable.
Finally, we perform a sweep over all edges in~$\graph_\planar$ that have at least one dummy vertex as endpoint, to determine any missing edges in~$\edges_x$ (to test whether an edge is accessible, we need a pointer to the unique original edge containing it).
Note that the linear sweep steps produce limited overhead in practice,
since the number of dummy vertices in graphs representing road
networks is typically small (as large parts of the input are planar to
begin with).
Afterwards, a vertex in~$\vertices_\planar$ is reachable if it is a dummy vertex marked as reachable, or its corresponding original vertex is reachable (if it exists). Otherwise, it is unreachable.
For edges in~$\edges_\planar \setminus \edges_x$, we check reachability of their endpoints to determine whether they are passable or unreachable.

An alternative approach modifies the search algorithm to work directly
on the planar graph~$\graph_\planar$ to avoid the additional linear
sweeps. However, this produces overhead during the search (\eg, case
distinctions for dummy vertices). Consequently, such approaches
did not provide significant speedup in
preliminary experiments. Moreover, determining the reachable subgraph
of~$\graph_\planar$ in a separate step simplifies the integration of
speedup techniques for the more expensive
search algorithm~\cite{bbdw-fcirn-15}.

\paragraph{Extracting the Border Regions.}

Given the set~$\edges_x$ of boundary edges and accessible edges in~$\graph_\planar$, we describe how to compute the actual border regions (\ie, the polygons describing $R$ and~$U$).
The basic idea is to traverse all faces of $\graph_\planar$ that contain edges in~$\edges_x$, to collect the segments that form boundaries of the border regions. Clearly, all passable edges in these faces are part of some reachable boundary, while all unreachable edges belong to an unreachable boundary. Moreover, since~$\graph_\planar$ is strongly connected, all faces contained in a border region must contain an edge in~$\edges_x$. Thus, traversing these faces is sufficient to obtain all border regions.

\begin{figure}[tb!]
\centering
  \includegraphics[page=1]{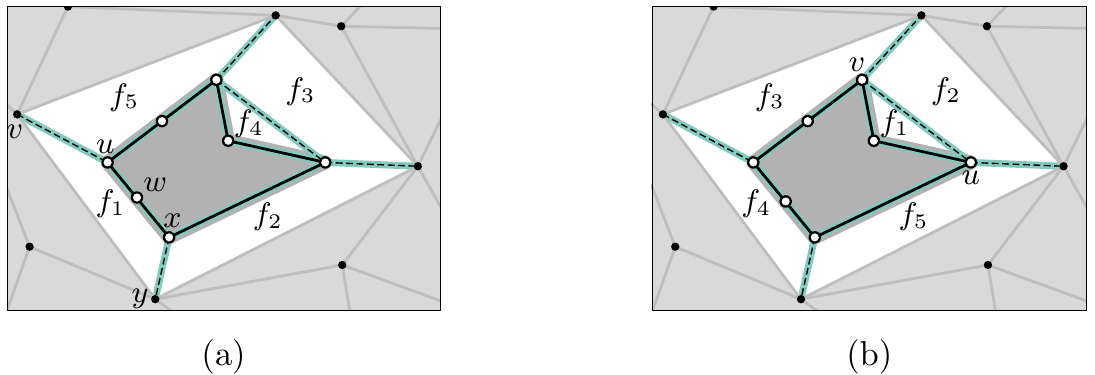}
  \caption{Visited edges (green) when extracting a reachable boundary. (a)~Starting at the boundary edge $(u,v)\in \edges_x$, the face~$f_1$ is traversed first until the boundary edge~$(x, y)\in \edges_x$ is encountered. Afterwards, the faces~$f_2,f_3,f_4,f_3,f_5$ are visited in this order until~$(u,v)$ is reached again. (b)~Starting at the accessible edge~$(u,v) \in \edges_x$, the face~$f_1$ is traversed until~$(v,u) \in \edges_x$ is reached. Then, the faces~$f_2,f_3,f_4,f_5,f_2$ are processed before~$(u,v)$ is reached again.}
  \label{fig:border-region-extraction}
\end{figure}

In somewhat more detail, we maintain two flags for every edge~$(u,v)$ in~$\edges_x$ indicating
whether~$u$ or~$v$ has been visited, respectively, initially set to
false.
Let~$(u,v)$ be the first edge of~$\edges_x$
that is considered, and without loss of generality, let~$u$ be
reachable.
We compute the (unique) reachable component~$R_{(u,v)}$ of the border
region~$B_{(u,v)}$ containing~$(u,v)$; see Figure~\ref{fig:border-region-extraction}. We mark~$(u,v)$ as visited and traverse the face left of~$(u,v)$, following the unique neighbor~$w$ of~$u$ in this
face that is not~$v$. Every edge that we traverse is added to~$R_{(u,v)}$.
As soon as we encounter an edge~$(x,y) \in \edges_x$,
we continue by traversing the \emph{twin face} of~$(x,y)$, \ie, the unique face of $\graph_\planar$ that contains the other side of~$(x,y)$. The edge~$(x,y)$ itself
is not added to~$R_{(u,v)}$. Moreover, we mark~$(x,y)$ as
visited. The current extraction step is finished as soon as~$(u,v)$ with the same orientation is reached again; see Figure~\ref{fig:border-region-extraction}.
%
%if it is a boundary edge, or for the third
%time if it is an accessible edge.
%
If~$v$ is unreachable, $(u,v)$ is a boundary edge. Thus, we continue with the extraction of the unreachable component~$U_{(u,v)}$ containing~$v$ in the same manner and assign it to~$B_{(u,v)}$.

We loop over the remaining edges in~$\edges_x$ and extract boundaries corresponding to vertices not visited before.
By extracting reachable components first, we ensure that the corresponding reachable boundary of some unreachable component is always known before extraction, namely, the boundary containing the reachable endpoint of the considered edge in~$\edges_x$.
Therefore, the unreachable component is assigned to the unique border region that contains this reachable boundary.

\subparagraph{Implementation Details.}

To extract the components of all border regions, we have to traverse the faces of the planar input graph. We use a cache-friendly data structure to represent these faces, allowing us to run along a face efficiently. For each face, we store the sequence of vertices as they are found traversing the face in clockwise order starting at an arbitrary vertex. At the beginning and at the end of this sequence, we store sentinels that hold the index of the first and last entry of a vertex of the corresponding face, respectively. Then, we can use one single array that holds all faces of the graph.
Traversing the face in either direction requires only a sweep, jumping at most once to the beginning or end of the face, respectively.
For consecutive vertices~$\vertexa,\vertexb$ in this array, we store at~$\vertexb$ its index in the corresponding twin face of the edge~$(\vertexa,\vertexb)$. Finally, we store for every edge~$(\vertexa,\vertexb)$ in the graph the two indices of the head vertex~$\vertexb$ in this data structure, \ie, its occurrence in the faces left and right of this edge.

To efficiently decide whether an edge is in~$\edges_x$, or if an edge in~$\edges_x$ was marked as visited, we store this list as an array and sort it (\eg, by the index of the head vertex) before extracting the reachable components. Then we can quickly retrieve an edge in this array using binary search (we also tried using hash sets as an alternative approach, but this turned out to be slightly slower in preliminary experiments).

%%%%%%%%%%%%%%%%%%%%%%%%%%%%%%%%%%%%%%%%%%%%%%%%%%%%%%%%%%%%%%%%%%%%%%%%%%%%%%%%
\section{Range Polygons in Border Regions Without Holes}
\label{sec:min_link_path}
%%%%%%%%%%%%%%%%%%%%%%%%%%%%%%%%%%%%%%%%%%%%%%%%%%%%%%%%%%%%%%%%%%%%%%%%%%%%%%%%

Given a border region~$B$ with reachable component~$R$ and a single unreachable component~$U$, we present an algorithm for computing a polygon that separates~$R$ and~$U$. In Section~\ref{sec:heuristics}, we describe how to generalize our approach to the case~$|U| > 1$.

The basic idea is to add an arbitrary boundary edge~$\edge$ to~$B$, thereby connecting both components~$R$ and~$U$. Since we presume that~$\graph$ is strongly connected, such a boundary edge always exists. In
the resulting hole-free non-crossing polygon~$B'$, we compute a path with
minimum number of segments that connects both sides of~$\edge$.
The algorithm of Suri~\cite{Sur86} computes such a \emph{minimum-link path}~$\apath'$ in linear time. We obtain a separating polygon~$S'$ by connecting the endpoints of~$\apath'$ along~$\edge$. It is easy to see that this yields a polygon with at most two additional segments compared to an optimal solution.
\begin{figure}
\centering
\includegraphics[page=1]{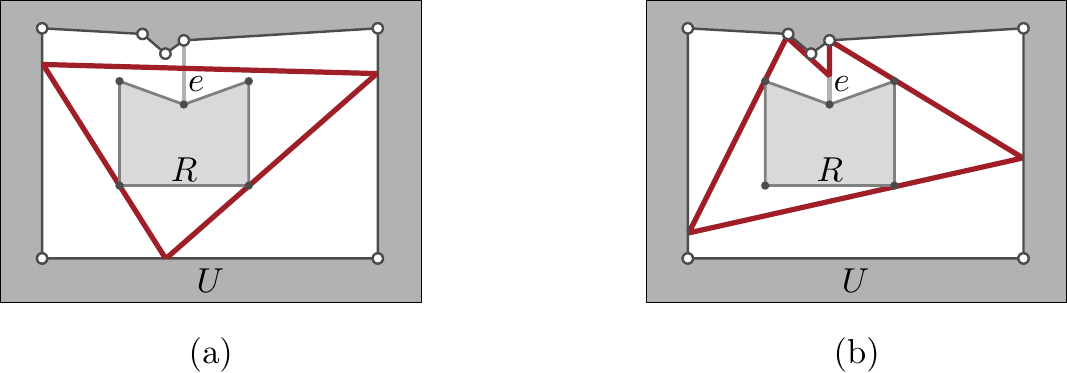}
\caption{(a)~The polygon $S$ (red) separating $R$ and $U$ with $\opt = 3$ links.  (b)~The polygon $S'$ obtained by a minimum-link path from~$e$ has $\opt+2 = 5$ segments.}
\label{fig:approximation-lemma}
\end{figure}
\begin{lemma}
  Let $S$ be a polygon that separates~$R$ and~$U$ with minimum
  number of segments, and let~$\opt$ denote this number. Then~$S'$ has
  at most $\opt + 2$ segments.
\end{lemma}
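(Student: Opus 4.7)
My plan is to exhibit a witness polyline in $B'$ with at most $\opt + 1$ links that connects the two sides of $e$; then the minimum-link path algorithm returns some $\apath'$ with at most $\opt + 1$ links, and since $S'$ is formed from $\apath'$ by adding a single closing segment along $e$, this yields $|S'| \le \opt + 2$.

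The witness comes from $S$ itself. Because $e$ connects a reachable vertex (on $R$) with an unreachable one (on $U$) and $S$ separates $R$ from $U$, the polygon $S$ must cross $e$ an odd number $m \ge 1$ of times. Cutting $S$ at these crossings decomposes it into $m$ disjoint arcs in the disk $B'$, each having both endpoints on (one of the two sides of) $e$. I will call such an arc \emph{essential} if its two endpoints lie on opposite sides of $e$, and \emph{non-essential} otherwise. At least one essential arc must exist, for otherwise $S$ would not wrap around the hole $U$ in $B$ and hence would not separate $R$ from $U$.

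The hard part is bounding the link-complexity of some essential arc. Introducing the $m$ crossings as dummy vertices of $S$ splits at most $m$ of its $\opt$ segments in two, so the total number of segments summed over all $m$ arcs is at most $\opt + m$. In generic position, any non-essential arc has at least two segments: a single straight segment of $S$ can meet the line segment $e$ in at most one point, so an arc that enters $B'$ and returns to the same side of $e$ must pass through at least one interior vertex of $S$. Letting $E \ge 1$ denote the number of essential arcs, the $m - E$ non-essential ones therefore contribute at least $2(m - E)$ segments in total, so some essential arc has at most $(\opt + 2E - m)/E \le \opt + 1$ links, with equality tight only in the minimal case $m = E = 1$. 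This essential arc is a valid polyline in $B'$ from one side of $e$ to the other, and serves as the required witness; the claimed bound on $|S'|$ follows as described above.
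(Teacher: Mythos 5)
Your proof is correct and follows the same core strategy as the paper: produce from $S$ a witness polyline inside $B'$ with at most $\opt+1$ links connecting the two sides of $e$, invoke optimality of the minimum-link path $\apath'$, and close it up along $e$ for one more segment. The difference is that the paper's proof simply says ``split $S$ at $e$'' and splits \emph{a} crossing segment into two, which implicitly treats the case where $S$ crosses $e$ exactly once; your version explicitly handles the general situation where $S$ meets $e$ at $m>1$ points, where the paper's cut would yield a path that still crosses $e$ and hence does not live in $B'$. Your counting argument (total link budget $\opt+m$ over the $m$ arcs, at least two links per non-essential arc, and the parity/topological fact that an essential arc exists) correctly extracts a single essential arc with at most $\opt+1$ links, so it closes a small gap that the paper's one-line argument leaves open. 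The only soft spot is the existence of an essential arc, which you justify by appeal to ``wrapping around $U$''; this is right, and can be made precise by noting that consecutive arcs along $S$ force a side switch at each of the $m$ crossings, so the number $E$ of essential arcs satisfies $E\equiv m\pmod 2$, and $m$ odd gives $E\ge 1$.
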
 

\begin{proof}
 We can split~$S$ at~$\edge$ into a path~$\apath$ connecting both sides of~$\edge$. Clearly,~$\apath$ has at most~$\opt + 1$ links (if a segment of~$S$ crosses~$\edge$, we split it into two segments with endpoints in~$\edge$; see Figure~\ref{fig:approximation-lemma}). Since~$\apath'$ is a minimum-link path, we have~$|\apath'| \le |\apath| = \opt + 1$. Moreover,~$S'$ is obtained by adding a single subsegment of~$\edge$ to~$\apath'$, so its complexity is bounded by~$\opt + 2$.
\end{proof}

We now address the subproblem of
computing a minimum-link path between two edges of a simple polygon.
%
%By Suri~\cite{Sur86}, this can be done in linear time.
%
The linear-time algorithm of Suri~\cite{Sur86} starts by triangulating the input polygon. To save running time, we can preprocess this step and triangulate all faces of the planar graph~$\graph_\planar$.
Afterwards, in each step of Suri's algorithm, a \emph{window} (which we formally
define in a moment) is computed.
To obtain the windows in linear time, it relies on several calls to a subroutine computing visibility polygons. While this is sufficient to prove linear running time, it seems wasteful from a practical point of view.
In the following, we present an
alternative algorithm for computing the windows that also results in
linear running time, but is much simpler.  It can be seen as a generalization of
an algorithm by Imai and Iri~\cite{ii-a-87} for approximating
piecewise linear functions.

\paragraph{Windows and Visibility.}

Let $P$ be a simple polygon and let $a$ and $b$ be edges of~$P$.  We
want to compute a minimum-link polygonal path starting at $a$ and
ending in $b$ that lies in the interior of $P$.  Let $T$ be the graph obtained by arbitrarily
triangulating~$P$.  Let $t_a$ and $t_b$ be the triangles incident to
$a$ and $b$, respectively.  As $T$ is an outerplanar graph, its (weak)
dual graph has a unique path
$t_a = t_1, t_2, \dots, t_{k-1}, t_k = t_b$ from $t_a$ to $t_b$; see
Figure~\ref{fig:visibility-notation-1}a.  We call the triangles on
this path \emph{important} and their position in the path their
\emph{index}.

The \emph{visibility polygon} $V(a)$ of the edge $a$ in $P$ is the
polygon that contains a point $p$ in its interior if and only if there
is a point $q$ on $a$ such that the line segment $pq$ lies inside~$P$.
Let $i$ be the highest index such that the intersection of the
triangle $t_i$ with the visibility polygon $V(a)$ is not empty.  The
\emph{window} $w(a)$ is the edge of $V(a)$ that intersects $t_i$
closest to the edge between $t_i$ and $t_{i+1}$; see
Figure~\ref{fig:visibility-notation-1}b.  Note that $w(a)$ separates the
polygon $P$ into two parts.  Let $P'$ be the part containing the edge
$b$ that we want to reach.  A minimum-link path from $a$ to $b$ in $P$ can
then be obtained by adding an edge from $a$ to $w(a)$ to a minimum-link path from $w(a)$
to $b$ in $P'$.  Thus, the next window is computed in $P'$ starting
with the previous window $w(a)$.  In the following, we first describe
how to compute the first window and then discuss what has to be
changed to compute the subsequent windows.

\begin{figure}[t]
  \centering
  \includegraphics[page=1]{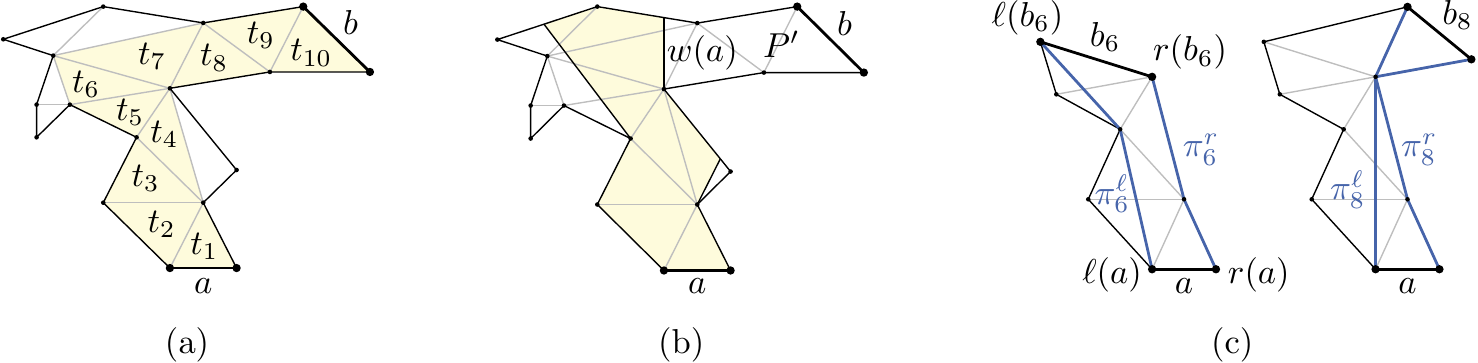}
  \caption{(a)~The important triangles (with respect to $a$ and $b$)
    of a polygon.  (b)~The window $w(a)$ is an edge of the (shaded)
    visibility polygon.  (c)~The left and right shortest paths (blue)
    intersect for $i = 8$ but not for $i = 6$.}
  \label{fig:visibility-notation-1}
\end{figure}

Let $T_i$ be the subgraph of $T$ induced by the triangles $t_1, \dots,
t_i$ and let $P_i$ be the polygon bounding the outer face of $T_i$.
The polygon $P_i$ has two special edges, namely $a$ and the edge
shared by $t_i$ and $t_{i+1}$, which we call $b_i$.  Let $\lef(a)$ and
$\rig(a)$, and $\lef(b_i)$ and $\rig(b_i)$ be the endpoints of $a$ and
$b_i$, respectively, such that their clockwise order is $\rig(a)$,
$\lef(a)$, $\lef(b_i)$, $\rig(b_i)$ (think of $\lef(\cdot)$ and
$\rig(\cdot)$ being the left and right endpoints, respectively); see
Figure~\ref{fig:visibility-notation-1}c.  We define the \emph{left
  shortest path} $\pi_i^{\lef}$ to be the shortest polygonal path
(shortest in terms of Euclidean length) that connects $\lef(a)$ with
$\lef(b_i)$ and lies inside or on the boundary of $P_i$.  The
\emph{right shortest path} $\pi_i^{\rig}$ is defined analogously for
$\rig(a)$ and $\rig(b_i)$; see Figure~\ref{fig:visibility-notation-1}c.

Assume that the edge $b_i$ is visible from $a$, i.e., there exists a
line segment in the interior of $P_i$ that starts at $a$ and ends at
$b_i$.  Such a visibility line separates the polygon into a left and a
right part.  Observe that it follows from the triangle inequality that
the left shortest path $\pi_i^{\lef}$ and the right shortest path
$\pi_i^{\rig}$ lie inside the left and right part, respectively.
Thus, these two paths do not intersect.  Moreover, the two shortest
paths are \emph{outward convex} in the sense that the left shortest
paths $\pi_i^{\lef}$ has only left bends when traversing it from
$\lef(a)$ to $\lef(b_i)$ (the symmetric property holds for
$\pi_i^{\rig}$); see the case $i = 6$ in
Figure~\ref{fig:visibility-notation-1}c.  We note that the outward
convex paths are sometimes also called ``inward convex'' and the
polygon consisting of the two outward convex paths together with the
edges $a$ and $b_i$ is also called \emph{hourglass}~\cite{gh-ospqsp-89}.  The following lemma, which
is similar to a statement shown by Guibas~et~al.~\cite[Lemma~3.1]{Gui87}, summarizes the above observation.

\begin{lemma}
  \label{lem:visible-impl-no-intersection}
  If the triangle $t_{i}$ is visible from $a$, then the left and right
  shortest paths in $P_{i-1}$ have empty intersection.  Moreover, if
  these paths do not intersect, they are outward convex.
\end{lemma}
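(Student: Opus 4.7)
The plan is to prove the two implications separately, leaning in each case on the fact that a shortest polygonal path inside a simple polygon is a ``taut string'' that bends only at reflex vertices, together with a standard shortcut argument.

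For the first implication, I would use the visibility hypothesis to produce a segment inside $P_{i-1}$ that separates the endpoints of the two paths. Since $t_i$ is visible from $a$, there is a segment in $P_i$ from some point $p \in a$ to a point of $t_i$; this segment crosses $b_{i-1}$ in a point $p'$ (because $t_i$ meets $P_{i-1}$ only along $b_{i-1}$), and the initial subsegment $\sigma = \overline{pp'}$ lies entirely in $P_{i-1}$. The segment $\sigma$ splits $P_{i-1}$ into a left sub-polygon $P^\lef$ whose boundary contains $\lef(a)$ and $\lef(b_{i-1})$, and a right sub-polygon $P^\rig$ containing $\rig(a)$ and $\rig(b_{i-1})$. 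The key step is then to show $\pi_{i-1}^\lef \subseteq P^\lef$: if $\pi_{i-1}^\lef$ ever entered $P^\rig$, it would cross $\sigma$ at two points $x,y$, and replacing the subpath from $x$ to $y$ by the straight segment $\overline{xy} \subseteq \sigma$ would strictly shorten it, contradicting its shortest-path status. The analogous statement gives $\pi_{i-1}^\rig \subseteq P^\rig$, and since $P^\lef \cap P^\rig = \sigma$ while the four path endpoints are distinct vertices lying off $\sigma$, the two paths are disjoint.

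For the second implication, I would close the two non-intersecting paths with the edges $a$ and $b_{i-1}$ into a simple sub-polygon $H \subseteq P_{i-1}$ (simplicity being exactly what non-intersection provides). Since $\pi_{i-1}^\lef$ is a shortest $\lef(a)$-to-$\lef(b_{i-1})$ path in $P_{i-1}$ and lies in $H \subseteq P_{i-1}$, it is also a shortest such path in $H$. Invoking the taut-string characterization, every interior vertex of $\pi_{i-1}^\lef$ is a reflex vertex of $H$, so the interior of $H$ locally lies on the reflex side of each bend. Because the interior of $H$ lies on the right of $\pi_{i-1}^\lef$ as traversed from $\lef(a)$ to $\lef(b_{i-1})$, each such bend must be a left turn, which is the defining property of outward convexity; the symmetric argument applies to $\pi_{i-1}^\rig$. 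The most subtle point, which I expect to be the main obstacle, is the degenerate case where a shortest path merely touches $\sigma$ without properly crossing it; this I would rule out by noting that $\sigma$ lies in the open interior of $P_{i-1}$ except at $p$ and $p'$, so any such tangent contact would occur at an interior point of a straight subsegment of the shortest path and hence would in fact force a true crossing, reducing again to the shortcut argument.
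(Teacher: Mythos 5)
Your proof is correct and follows essentially the same route as the paper, which likewise argues that the visibility segment separates $P_{i-1}$ into a left and a right part containing $\pi_{i-1}^{\lef}$ and $\pi_{i-1}^{\rig}$ respectively (via the triangle inequality, i.e., your shortcut argument) and then reads off outward convexity from the taut-string property. You merely supply more detail than the paper's informal observation, including the degenerate tangency case that the paper dismisses by a general-position assumption.
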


\begin{figure}[t]
  \centering
  \includegraphics[page=1]{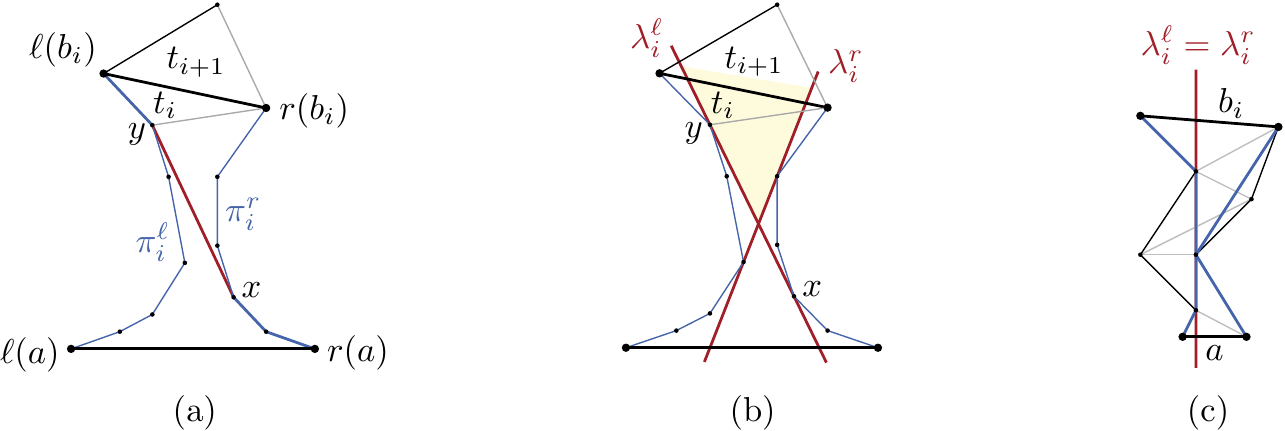}
  \caption{(a)~The shortest path from $\rig(a)$ to $\lef(b_i)$
    consists of the bold prefix of $\pi_i^{\rig}$, the red segment,
    and the bold suffix of $\pi_i^{\lef}$.  (b)~The two visibility
    lines spanning the (shaded) visibility cone.  (c)~Degenerated
    case.}
  \label{fig:hourglass}
\end{figure}

Guibas et al.~\cite{Gui87} argue that the converse of the first
statement is also true, i.e., if the two paths have empty
intersection, then the triangle $t_{i+1}$ is visible from~$a$.  Their
main arguments go as follows.  The shortest path (\wrt Euclidean
length) in the hourglass that connects $\rig(a)$ with $\lef(b_i)$ is
the concatenation of a prefix of $\pi_i^{\rig}$, a line segment from a
vertex $x$ of $\pi_i^{\rig}$ to a vertex $y$ of $\pi_i^{\lef}$, and a
suffix of $\pi_i^{\lef}$; see Figure~\ref{fig:hourglass}a.  We call the
straight line through $x$ and $y$ the \emph{left visibility line} and
denote it by $\lambda_i^{\lef}$.  We assume $\lambda_i^{\lef}$ to be
oriented from $x$ to $y$ and call $x$ and $y$ the \emph{source} and
\emph{target} of $\lambda_i^{\lef}$.  Analogously, one can define the
\emph{right visibility line} $\lambda_i^{\rig}$; see
Figure~\ref{fig:hourglass}b.  We call the intersection of the half-plane
to the right of $\lambda_i^{\lef}$ with the half-plane to the left of
$\lambda_i^{\rig}$ the \emph{visibility cone}.  It follows that the
intersection of the visibility cone with the edge $b_i$ is not empty
and a point on the edge $b_i$ is visible from $a$ if and only if it
lies in this intersection~\cite{Gui87}.  This directly extends to the
following lemma.

\begin{lemma}
  \label{lem:no-intersection-impl-visible}
  If the left and right shortest paths in $P_{i-1}$ have empty
  intersection, $t_{i}$ is visible from $a$.  Moreover, a point
  in $t_{i}$ is visible from $a$ if and only if it lies in the
  visibility cone.
\end{lemma}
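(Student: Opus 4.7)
The plan is to prove the ``moreover'' clause directly and then deduce the first assertion, since the visibility cone will turn out to meet $b_{i-1}$ in a nonempty subsegment. The working setting is the hourglass $H$ bounded by $a$, $\pi_{i-1}^{\rig}$, $b_{i-1}$, and $\pi_{i-1}^{\lef}$; by Lemma~\ref{lem:visible-impl-no-intersection}, disjointness of the shortest paths forces outward convexity, so the left and right sides of $H$ are both convex chains when viewed from its interior. Every line segment from $a$ into $t_i$ must cross $b_{i-1}$, so the visibility of $t_i$ from $a$ reduces to determining which points of $b_{i-1}$ can be reached by a straight segment that stays inside $H$.

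First I would justify the definitions of $\lambda_{i-1}^{\lef}$ and $\lambda_{i-1}^{\rig}$ using the standard funnel structure of Euclidean shortest paths in simple polygons: the shortest path in $H$ from $\rig(a)$ to $\lef(b_{i-1})$ decomposes as a prefix along $\pi_{i-1}^{\rig}$ up to a bridge vertex $x$, the segment $xy$, and a suffix along $\pi_{i-1}^{\lef}$ from $y$, and outward convexity turns $\lambda_{i-1}^{\lef}$ into a common inner tangent to the two chains. The symmetric construction yields $\lambda_{i-1}^{\rig}$. Because both lines cross $a$ as well as $b_{i-1}$, their wedge (the visibility cone) intersects $b_{i-1}$ in a nonempty subsegment, which will give the first assertion of the lemma once visibility is characterized.

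Next I would verify both directions of the characterization. For the ``if'' direction, given $p \in b_{i-1}$ in the open cone I would pick $q \in a$ so that the segment $pq$ lies strictly between the two visibility lines; since $\lambda_{i-1}^{\lef}$ and $\lambda_{i-1}^{\rig}$ support $\pi_{i-1}^{\lef}$ and $\pi_{i-1}^{\rig}$ as inner tangents, such a segment cannot cross either convex chain and therefore stays in $H \subseteq P$, witnessing visibility. The ``only if'' direction is a contraposition: a segment from $a$ to a point of $b_{i-1}$ outside the cone lies on the wrong side of $\lambda_{i-1}^{\lef}$ or $\lambda_{i-1}^{\rig}$ past the bridge vertex and must therefore cross the supported outward convex chain, exiting $H$. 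The main obstacle is the bookkeeping in the ``if'' direction, where one has to combine tangency at $x$ and $y$ with the global convexity of the chains to certify that a segment between the visibility lines really stays inside $H$ throughout its length; once this geometric step is in place, extending a visibility segment across $b_{i-1}$ into $t_i$ closes both parts of the statement.
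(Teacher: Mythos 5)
Your overall route is the one the paper itself takes: the paper does not prove this lemma from scratch but attributes the cone characterization to Guibas et al.~\cite{Gui87}, defining the visibility lines via the bridge segments of the two cross shortest paths in the hourglass and deducing the lemma from the fact that the cone meets $b_{i-1}$ in a nonempty subsegment of visible points. Your plan --- prove the ``moreover'' clause and read off the first assertion from that nonemptiness --- matches this logic exactly, and your ``only if'' direction and the final extension across $b_{i-1}$ into the convex triangle $t_i$ are the standard pocket arguments.

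There is, however, one step that fails as written. In the ``if'' direction you propose to ``pick $q \in a$ so that the segment $pq$ lies strictly between the two visibility lines.'' No such $q$ exists in general: with the paper's orientations the visibility cone is the wedge whose apex is the intersection point of $\lambda_{i-1}^{\lef}$ and $\lambda_{i-1}^{\rig}$, and that apex lies in the waist of the hourglass, strictly between $a$ and $b_{i-1}$ (already for a convex quadrilateral hourglass the two visibility lines are the diagonals). Hence the cone contains no point of $a$, every witnessing segment must leave the cone on its way back to $a$, and your tangency argument --- which only controls the cone side of each line --- no longer certifies that the segment avoids the chains once it has crossed out of the cone. The repair stays inside your framework: take $q$ to be the point where the line through $p$ and the apex meets $a$. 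That segment lies in the closed \emph{double} wedge of the two lines, and because each visibility line is an inner common tangent --- it has $\pi_{i-1}^{\lef}$ weakly on its left and $\pi_{i-1}^{\rig}$ weakly on its right --- both chains are confined to the two side wedges, so the double wedge, and with it the witnessing segment, is disjoint from the chains, exits the hourglass only through $a$ and $b_{i-1}$, and therefore stays inside $P_{i-1}$. With that substitution (and the observation that such a ray cannot leave through either chain, hence must reach $a$), the rest of your outline goes through.
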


Note that when~$\pi_i^{\rig}$ and~$\pi_i^{\lef}$ are both outward
convex and intersect, the visibility cone degenerates to a line; see
Figure~\ref{fig:hourglass}c. For the sake of simplicity, we presume
that all points are in general position, which prevents this special
case. (In practice, it is handled implicitly by the implementation of
Line~\ref{line:return-window-test} in Algorithm~\ref{alg:first-window}
below.)
The above observations then justify the following approach for computing
the window.  We iteratively increase $i$ until the left and the right
shortest path of the polygon $P_i$ intersect.  We then know that the
triangle $t_{i+1}$ is no longer visible; see
Lemma~\ref{lem:visible-impl-no-intersection}.  Moreover, as the left
and the right shortest paths did not intersect in $P_{i-1}$, the
triangle $t_i$ is visible from~$a$; see
Lemma~\ref{lem:no-intersection-impl-visible}.  To find the window, it
remains to find the edge of the visibility polygon $V(a)$ that
intersects $t_i$ closest to the edge between $t_i$ and $t_{i+1}$.
Thus, by the second statement of
Lemma~\ref{lem:no-intersection-impl-visible}, the window must be a
segment of one of the two visibility lines.
It remains to fill out the details of this algorithm, argue that it
runs in overall linear time, and describe what has to be done in later
steps, when we start at a window instead of an edge.

\begin{figure}[t]
  \centering
  \includegraphics[page=1]{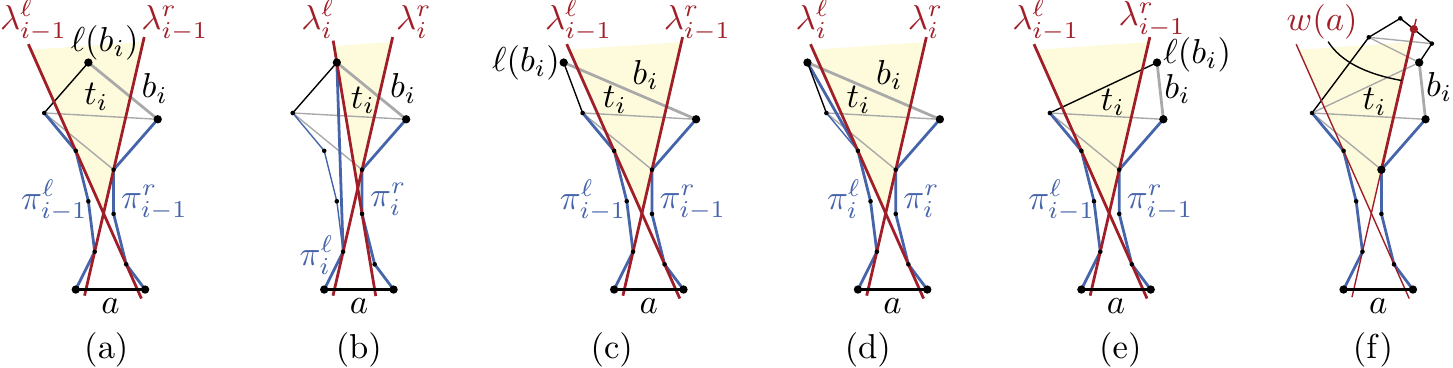}
  \caption{(a)~The new vertex $\lef(b_i)$ lies in the visibility cone.
    (b)~The updated left shortest path $\pi_i^{\lef}$ and left
    visibility line $\lambda_i^{\lef}$.  (c)~The vertex $\lef(b_i)$
    lies to the left of $\lambda_{i-1}^{\lef}$.  (d)~The left shortest
    path has to be updated, the left visibility line remains
    unchanged.  (e)~The vertex $\lef(b_i)$ lies to the right of
    $\lambda_{i-1}^{\rig}$, i.e., $t_{i+1}$ is not visible from $a$.
    (f)~The window $w(a)$ is a segment of $\lambda_{i-1}^{\rig}$.}
  \label{fig:first-window-new-triangle}
\end{figure}

\paragraph{Computing the First Window.}

We start with the details of the algorithm starting from an edge; also
see Algorithm~\ref{alg:first-window}.  Assume the triangle $t_i$ is
still visible from $a$, i.e., $\pi_{i-1}^{\lef}$ and
$\pi_{i-1}^{\rig}$ do not intersect.  Assume further that we have
computed the left and right shortest paths $\pi_{i-1}^{\lef}$ and
$\pi_{i-1}^{\rig}$ as well as the corresponding visibility lines
$\lambda_{i-1}^{\lef}$ and $\lambda_{i-1}^{\rig}$ in a previous step.
Assume without loss of generality that the three corners of the
triangle $t_i$ are $\lef(b_{i-1})$, $\lef(b_i)$, and
$\rig(b_i) = \rig(b_{i-1})$.
%
%(as in Figure~\ref{fig:first-window-new-triangle}).
%
%There are three
%possibilities; see Figure~\ref{fig:first-window-new-triangle}.  The
%new vertex $\lef(b_i)$ lies either in the visibility cone spanned by
%$\lambda_{i-1}^{\lef}$ and $\lambda_{i-1}^{\rig}$, to the left of the
%left visibility line $\lambda_{i-1}^{\lef}$, or to the right of the
%right visibility line $\lambda_{i-1}^{\rig}$.
There are three
possibilities shown in Figure~\ref{fig:first-window-new-triangle}, i.e., the
new vertex $\lef(b_i)$ lies either in the visibility cone spanned by
$\lambda_{i-1}^{\lef}$ and $\lambda_{i-1}^{\rig}$, to the left of the
left visibility line $\lambda_{i-1}^{\lef}$, or to the right of the
right visibility line $\lambda_{i-1}^{\rig}$.

By Lemma~\ref{lem:no-intersection-impl-visible}, a point in $t_i$ is
visible from $a$ if and only if it lies inside the visibility cone.
Thus, the edge $b_i$ between $t_i$ and $t_{i+1}$ is no longer visible
if and only if the new vertex $\lef(b_i)$ lies to the right of
$\lambda_{i-1}^{\rig}$; see Figure~\ref{fig:first-window-new-triangle}e.
In this case, we can stop and the desired window $w(a)$ is the segment
of $\lambda_{i-1}^{\rig}$ starting at its touching point with
$\pi_{i-1}^{\rig}$ and ending at its first intersection with an edge
of $P$; see Figure~\ref{fig:first-window-new-triangle}f and
lines~\ref{line:return-window-start}--\ref{line:return-window-end} of
Algorithm~\ref{alg:first-window}.

\begin{algorithm}[t]
  \caption{\label{alg:first-window}Computes the first window $w(a)$.}%
  \newcommand{\commentfont}[1]{\texttt{\color{gray}#1}}
  \SetCommentSty{commentfont} \tcp{initial paths (one-vertex
    sequences) and visibility lines} $\pi^{\lef} = [\lef(a)]$;\quad%
  $\pi^{\rig} = [\rig(a)]$;\quad%
  $\lambda^{\lef} = \mathrm{line}(\rig(a), \lef(a))$;\quad%
  $\lambda^{\rig} = \mathrm{line}(\lef(a), \rig(a))$\;%
  \For{$i = 1$ \KwTo $k$}{%
    \eIf{$\rig(b_i) = \rig(b_{i-1})$}{%
      \tcp{$b_i$ not visible $\Rightarrow$ return window}%
      \If{$\lef(b_i)$ lies to the right of $\lambda^{\rig}$\label{line:return-window-test}}{%
        \label{line:return-window-start}%
        $x = $ first intersection of $\lambda^{\rig}$ with $P$ after
        $\mathrm{target}(\lambda^{\rig})$\;
        \KwRet{$\mathrm{segment}(\mathrm{target}(\lambda^{\rig}),
          x)$}%
        \label{line:return-window-end}%
      }%
      \tcp{extend left path $\pi^{\lef}$ like in Graham's scan}%
      append $\lef(b_i)$ to $\pi^{\lef}$\;%
      \label{line:extend-path-start}%
      \While{last bend of $\pi^{\lef}$ is a right bend}{%
        remove second to last element from $\pi^{\lef}$\;%
        \label{line:extend-path-end}%
      }%
      \tcp{$\lef(b_i)$ in visibility cone $\Rightarrow$ update left
        visibility line $\lambda^{\lef}$}%
      \If{$\lef(b_i)$ lies to the right of $\lambda^{\lef}$}{%
        \label{line:visibility-line-start}%
        $\mathrm{target}(\lambda^{\lef}) = \lef(b_i)$\;%
        \While{$\lambda^{\lef}$ is not a tangent of $\pi^{\rig}$ at
          $\mathrm{source}(\lambda^{\lef})$}{%
          $\mathrm{source}(\lambda^{\lef}) = $ successor of
          $\mathrm{source}(\lambda^{\lef})\text{ in }\pi^{\rig}$\;%
          \label{line:visibility-line-end}%
        }%
      }%
    }{%
      \tcp{case $\lef(b_i) = \lef(b_{i-1})$ is symmetric to above case
        $\rig(b_i) = \rig(b_{i-1})$}%
    }%
  }%
\end{algorithm}

In the other two cases (Figure~\ref{fig:first-window-new-triangle}a and
Figure~\ref{fig:first-window-new-triangle}c), we have to compute the new
left and right shortest paths $\pi_i^{\lef}$ and $\pi_i^{\rig}$ and
the new visibility lines $\lambda_i^{\lef}$ and $\lambda_i^{\rig}$
(Figure~\ref{fig:first-window-new-triangle}b and
Figure~\ref{fig:first-window-new-triangle}d).  Note that the old
and new right shortest paths $\pi_{i-1}^{\rig}$ and $\pi_i^{\rig}$
connect the same endpoints $\rig(a)$ and $\rig(b_{i-1}) = \rig(b_i)$.
As the path cannot become shorter by going through the new triangle
$t_i$, we have $\pi_{i}^{\rig} = \pi_{i-1}^{\rig}$.  The same argument
shows that $\lambda_{i}^{\rig} = \lambda_{i-1}^{\rig}$ (recall that
the visibility lines were defined using a shortest path from $\lef(a)$
to~$\rig(b_{i-1}) = \rig(b_{i})$).

We compute the new left shortest path $\pi_i^{\lef}$ as follows; see
Lines~\ref{line:extend-path-start}--\ref{line:extend-path-end} of
Algorithm~\ref{alg:first-window}.  Let $x$ be the latest vertex on
$\pi_{i-1}^{\lef}$ such that the prefix of $\pi_{i-1}^{\lef}$ ending
at $x$ concatenated with the segment from $x$ to $\lef(b_i)$ is
outward convex.  We claim that $\pi_i^{\lef}$ is the path obtained by
this concatenation, i.e., this path lies inside $P_i$ and there is no
shorter path lying inside~$P_i$.  It follows by the outward convexity,
that there cannot be a shorter path inside $P_i$ from $\lef(a)$ to~$\lef(b_i)$.  Moreover, by the assumption that $\pi_{i-1}^{\lef}$ was
the correct left shortest path in~$P_{i-1}$, the subpath from
$\lef(a)$ to $x$ lies inside $P_i$.  Assume for contradiction that the
new segment from $x$ to $\lef(b_i)$ does not lie entirely inside
$P_i$.  Then it has to intersect the right shortest path and it
follows that the right shortest path and the correct left shortest
path have non-empty intersection, which is not true by
Lemma~\ref{lem:visible-impl-no-intersection}.

\begin{figure}[t]
  \centering
  \includegraphics[page=2]{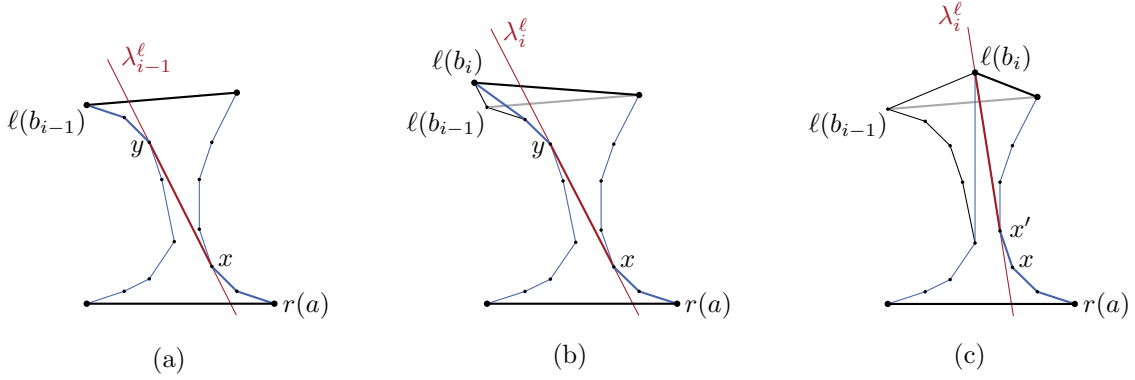}
  \caption{(a)~The shortest path from $\rig(a)$ to $\lef(b_{i-1})$
    (bold) defining the left visibility line~$\lambda_{i-1}^{\lef}$.
    (b)~The visibility line does not change if $\lef(b_i)$ lies to the
    left of $\lambda_{i-1}^{\lef}$.  (c)~Illustration how the
    visibility line changes when $\lef(b_i)$ lies to the right of~$\lambda_{i-1}^{\lef}$.}
  \label{fig:visibility-lines}
\end{figure}

To get the new left visibility line $\lambda_i^{\lef}$, we have to
consider the shortest path in $P_i$ that connects $\rig(a)$ with
$\lef(b_i)$.  Let $x$ and $y$ be the source and target of
$\lambda_{i-1}^{\lef}$, respectively, i.e., the shortest path from
$\rig(a)$ to $\lef(b_{i-1})$ is as shown in
Figure~\ref{fig:visibility-lines}a.  If the new vertex $\lef(b_i)$
lies to the left of $\lambda_{i-1}^{\lef}$
(Figure~\ref{fig:visibility-lines}b), then the shortest path from
$\rig(a)$ to $\lef(b_{i})$ also includes the segment from $x$ to $y$.
Thus, $\lambda_i^{\lef} = \lambda_{i-1}^{\lef}$ holds in this case.
Assume the new vertex $\lef(b_i)$ lies to the right of
$\lambda_{i-1}^{\lef}$ (Figure~\ref{fig:visibility-lines}c).  Let $x'$
be the latest vertex on the path $\pi_i^{\rig}$ such that the
concatenation of the subpath from $\rig(a)$ to $x'$ with the segment
from $x'$ to the new vertex $\lef(b_i)$ is outward convex in the sense
that it has only right bends; see Figure~\ref{fig:visibility-lines}c.
We claim that this path lies inside $P_i$ and that there is no shorter
path inside $P_i$.  Moreover, we claim that $x'$ is either a successor
of $x$ in $\pi_{i-1}^{\rig}$ or $x' = x$.  Clearly, the concatenation
of the path from $\rig(a)$ to $x$ with the segment from $x$ to
$\lef(b_i)$ is outward convex, thus the latter claim follows.  It
follows that the segment from $x'$ to $\lef(b_i)$ lies to the right of
the old visibility line~$\lambda_{i-1}^{\lef}$.  Thus, it cannot
intersect the path $\pi_i^{\lef}$ (except in its
endpoint~$\lef(b_i)$), as $\pi_{i-1}^{\lef}$ lies to the left
of~$\lambda_{i-1}^{\lef}$.  Moreover, as we chose $x'$ to be the last
vertex on $\pi_{i-1}^{\rig}$ with the above property, this new segment
does not intersect $\pi_i^{\rig}$ (except in $x'$).
Hence, the segment from $x'$ to $\lef(b_i)$ lies inside $P_i$.  As
before, it follows from the convexity that there is no shorter path
inside $P_i$.  Thus, $\lambda_i^{\lef}$ is the line through $x'$ and
$\lef(b_i)$ ($x'$ is the new source and $\lef(b_i)$ is the new
target).  It follows that
Lines~\ref{line:visibility-line-start}--\ref{line:visibility-line-end}
correctly compute the new left visibility line.

\begin{lemma}
  \label{lem:window-linear-time}
  Let $t_h$ be the triangle with the highest index that is visible
  from~$a$.  Then Algorithm~\ref{alg:first-window} computes the first
  window $w(a)$ in $O(h)$ time.
\end{lemma}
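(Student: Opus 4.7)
The plan is to establish the $O(h)$ bound via amortized analysis, treating the per-iteration cost of the two inner while-loops together rather than separately.

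First, I would argue that the outer \texttt{for}-loop executes at most $h$ iterations. By definition $t_{h+1}$ is not visible from $a$, so by Lemma~\ref{lem:visible-impl-no-intersection} the left and right shortest paths in $P_h$ intersect; equivalently, when processing $t_h$ the newly added vertex lies outside the visibility cone, triggering the return statement in Line~\ref{line:return-window-test} (or its symmetric counterpart). Conversely, for every $i < h$, the triangle $t_{i+1}$ is visible from $a$, so Lemma~\ref{lem:no-intersection-impl-visible} guarantees that the new vertex lies inside the visibility cone and the loop continues. Thus the number of iterations is exactly $h$.

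Second, I would bound the total work of the two while-loops by an accounting argument. For the path-extension loop (Lines~\ref{line:extend-path-start}--\ref{line:extend-path-end}), this is standard Graham-scan amortization: each vertex $\lef(b_i)$ is appended to $\pi^{\lef}$ in at most one iteration and removed in at most one (later) iteration, so the total number of removals across all $h$ iterations is at most $h$. For the visibility-line update (Lines~\ref{line:visibility-line-start}--\ref{line:visibility-line-end}), the text preceding the lemma already establishes the crucial monotonicity: the new source $x'$ is either equal to the old source $x$ or a successor of $x$ in $\pi^{\rig}$. Hence the pointer $\mathrm{source}(\lambda^{\lef})$ only moves forward along $\pi^{\rig}$ throughout the entire execution, and the total number of source advancements is bounded by the length of $\pi^{\rig}$, which is $O(h)$. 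The symmetric case $\lef(b_i) = \lef(b_{i-1})$ is argued identically for $\pi^{\rig}$ and $\lambda^{\rig}$.

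The remaining per-iteration work (orientation tests, appending one vertex, the outer comparisons) is $O(1)$, contributing $O(h)$ overall. For the final return step, the intersection of $\lambda^{\rig}$ with the boundary of $P$ past $\mathrm{target}(\lambda^{\rig})$ lies within the important triangles already processed (specifically, inside~$t_h$ or reached by walking through a constant number of off-path triangles adjacent to $t_h$), and so can be located in $O(h)$ time by walking along the triangulation from $t_h$ in the direction of $\lambda^{\rig}$.

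The main subtlety is the monotonicity of $\mathrm{source}(\lambda^{\lef})$: without it, the visibility-line update could cost $\Theta(h)$ per iteration and ruin the bound. I would isolate this as an explicit invariant -- ``across all iterations, $\mathrm{source}(\lambda^{\lef})$ only moves towards $\mathrm{target}(\lambda^{\lef})$ along $\pi^{\rig}$'' -- and verify it by showing that once a vertex $x$ of $\pi^{\rig}$ fails to support a tangent, it cannot regain this property in later iterations, because $\pi^{\rig}$ itself is only extended (in the symmetric case) and the target $\lef(b_i)$ drifts monotonically away from $a$ in the relevant angular sense.
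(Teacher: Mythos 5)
Your proof is correct and follows essentially the same route as the paper's: a constant-time visibility test per iteration, Graham-scan amortization for the path-extension loop (removed vertices never return), and the monotone advancement of the visibility-line source along $\pi^{\rig}$ to charge the tangent-finding loop. The only additions are your explicit bound on the number of outer iterations and the cost of locating the final intersection point, both of which the paper leaves implicit but which do not change the argument.
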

\begin{proof}
  We already argued that Algorithm~\ref{alg:first-window} correctly
  computes the first window.  To show that it runs in $O(h)$ time,
  first note that the polygon $P_h$ has linear size in $O(h)$.  Thus,
  it suffices to argue that the running time is linear in the size of
  $P_h$.  In each step $i$, we first check whether the next triangle
  is still visible by testing whether the new vertex $\lef(b_i)$ (or~$\rig(b_i)$) lies to the right of the visibility line
  $\lambda_{i-1}^{\rig}$ (or to the left of $\lambda_{i-1}^{\lef}$).
  This takes only constant time.  When updating the left and right
  shortest paths, we have to iteratively remove the last vertex of the
  previous path until the resulting path is outward convex.  This
  takes time linear in the number of vertices we remove.  However, a
  vertex removed in this way will never be part of a left or right
  shortest path again.  Thus, the number of these removal operations
  over all $h$ steps is bounded by the size of $P_h$.  When updating
  the visibility lines, the only operation potentially consuming more
  than constant time is finding the new source $x'$.  As $x'$ is a
  successor of the previous source $x$ (or $x' = x$), we never visit a vertex of $P_h$
  twice in this type of operation.  Thus, the total running time of
  finding these successors over all $h$ steps is again linear in the
  size of $P_h$.
\end{proof}

\paragraph{Initialization for Subsequent Windows.}

\begin{figure}[t]
  \centering
  \includegraphics[page=1]{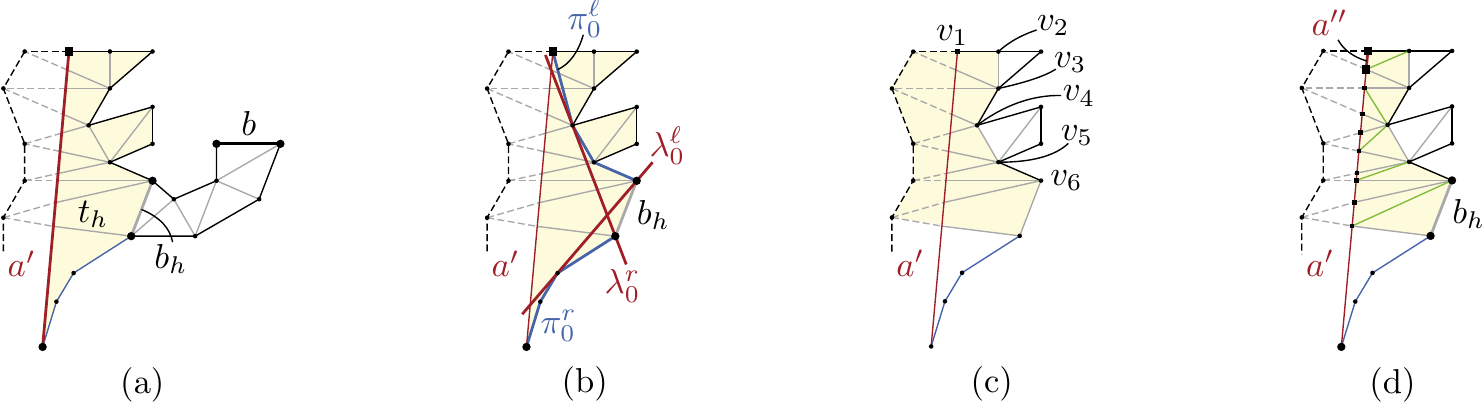}
  \caption{(a)~The polygon $P'$ we are interested in after computing
    the first window $a'$.  The initial part $P_0'$ is shaded yellow.
    (b)~Initial left and right shortest paths $\pi_0^{\lef}$ and
    $\pi_0^{\rig}$ (blue) with corresponding visibility lines
    (red).  (c)~The sequence $v_1, \dots, v_6$ we use for Graham's
    scan.  Triangles of $P$ intersected by $a'$ are shaded yellow.
    (d)~Computing the shortest path from $\lef(a'') = \lef(a')$ to
    $\lef(b_h)$ in this subdivided polygon using
    Algorithm~\ref{alg:first-window} actually applies Graham's scan to
    $v_1, \dots, v_6$.}
  \label{fig:second-window}
\end{figure}

As mentioned before, the first window $w(a)$ we compute separates $P$
into two smaller polygons.  Let $P'$ be the part including the edge~$b$
(and not $a$).  In the following, we denote $w(a)$ by $a'$.  To
get the next window $w(a')$, we have to apply the above procedure to
$P'$ starting with $a'$.  However, this would require to partially
retriangulate the polygon $P'$.  More precisely, let $t_h$ be the
triangle with the highest index that is visible from $a$ and let $b_h$
be the edge between $t_h$ and $t_{h+1}$; see
Figure~\ref{fig:second-window}a.  Then $b_h$ separates $P'$ into an
initial part $P_0'$ (the shaded part in Figure~\ref{fig:second-window}a)
and the rest (having $b$ on its boundary).  The latter part is
properly triangulated, however, the initial part $P_0'$ is not.  The
conceptually simplest solution is to retriangulate $P_0'$.
However, this would require an efficient subroutine for triangulation (and dynamic data structures that allow us to update~$P$ and~$T$, which produces overhead in practice).
Instead, we propose a much simpler method for computing the next window.

The general idea is to compute the shortest paths in $P_0'$ from
$\lef(a')$ to $\lef(b_h)$ and from $\rig(a')$ to $\rig(b_h)$; see
Figure~\ref{fig:second-window}b.  We denote these paths by
$\pi_0^{\lef}$ and $\pi_0^{\rig}$, respectively.  Moreover, we want to
compute the corresponding visibility lines $\lambda_0^{\lef}$ and
$\lambda_0^{\rig}$.  Afterwards, we can continue with the correctly
triangulated part as in Algorithm~\ref{alg:first-window}.

Concerning the shortest paths, first note that the right shortest path
$\pi_0^{\rig}$ is a suffix of the previous right shortest path, which
we already know.
% (as before, we assume that the window $w(a) = a'$ is a segment of
% the right visibility line $\lambda_h^{\rig}$; the other case is
% symmetric).
For the left shortest path $\pi_0^{\lef}$, first consider the polygon
induced by the triangles that are intersected by $a'$; see
Figure~\ref{fig:second-window}c.  Let $v_1, \dots, v_g$ be the path on
the outer face of this polygon (in clockwise direction) from
$\lef(a') = v_1$ to $\lef(b_h) = v_g$.  We obtain $\pi_0^{\lef}$ using
\emph{Graham's scan}~\cite{Graham1972} on the sequence
$v_1, \dots, v_g$, i.e., starting with an empty path, we iteratively
append the next vertex of the sequence $v_1, \dots, v_g$ while
maintaining the path's outward convexity by successively removing the
second to last vertex if necessary; see
Algorithm~\ref{alg:later-window}.  We note that applying Graham's scan
to arbitrary sequences of vertices may result in self-intersecting
paths~\cite{b-chfsptd-78}.  However, we will see that this does not
happen in our case.

It remains to compute the visibility lines $\lambda_0^{\lef}$ and
$\lambda_0^{\rig}$ corresponding to the hourglass consisting of $a'$,
$b_h$, and the shortest paths $\pi_0^{\lef}$ and $\pi_0^{\rig}$.  Note
that the whole edge $b_h$ is visible from $a'$, since $a'$ intersects the
triangle $t_h$.  Thus, the visibility lines go through the endpoints
of $b_h$.  It follows that $\lambda_0^{\lef}$ is the line that goes
through $\lef(b_h)$ and the unique vertex on $\pi_0^{\rig}$ such that
it is tangent to $\pi_0^{\rig}$; see Figure~\ref{fig:second-window}b.
This can be clearly found in linear time in the length of
$\pi_0^{\rig}$.  The same holds for the right visibility line.

\begin{algorithm}[t]
  \caption{\label{alg:later-window}Computes the initial left and right
    shortest paths with corresponding visibility lines.}%
  \newcommand{\commentfont}[1]{\texttt{\color{gray}#1}}%
  \SetCommentSty{commentfont}%
  $a' = $ last window\;%
  $\pi^{\rig} = $ right shortest path computed in the previous step\;%
  $\pi_0^{\rig} = $ suffix of $\pi^{\rig}$ starting with $\rig(a')$\;%
  \tcp{Graham's scan on the sequence $v_1, \dots, v_g$}%
  $\pi_0^{\lef} = $ empty path\;%
  \For{$i = 1$ \KwTo $g$}{%
    append $v_i$ to $\pi_0^{\lef}$\;%
    \While{last bend of $\pi_0^{\lef}$ is a right bend}{%
      remove second to last element from $\pi_0^{\lef}$\;%
    }%
  }%
  $\lambda_0^{\rig} = $ tangent of $\pi_0^{\lef}$ through
  $\rig(b_{h})$\;%
  $\lambda_0^{\lef} = $ tangent of $\pi_0^{\rig}$ through
  $\lef(b_{h})$\;%
  \KwRet{$(\pi_0^{\rig}, \pi_0^{\lef}, \lambda_0^{\lef},
    \lambda_0^{\rig})$}%
\end{algorithm}

\begin{lemma}
  \label{lem:initialization-linear-time}
  Algorithm~\ref{alg:later-window} computes the initial left and right
  shortest paths $\pi_0^{\lef}$ and $\pi_0^{\rig}$ as well as the
  corresponding visibility lines in $O(|P_0'|)$ time.
\end{lemma}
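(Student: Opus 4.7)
My plan is to verify correctness of each of the three objects returned by Algorithm~\ref{alg:later-window} in turn, and then to bound the total running time by $O(|P_0'|)$.

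For the right shortest path, I would argue that the suffix of $\pi^{\rig}$ starting at $\rig(a')$ is exactly the right shortest path of the hourglass over $P_0'$. The key observation is that $\rig(a')$ is the vertex of $\pi^{\rig}$ where the previous right visibility line $\lambda^{\rig}$ touches $\pi^{\rig}$ (by the way the first window was constructed in Algorithm~\ref{alg:first-window}). Outward convexity of $\pi^{\rig}$ together with tangency of $\lambda^{\rig}$ at $\rig(a')$ then implies that the suffix lies in $P_0'$, and any strictly shorter path in $P_0'$ from $\rig(a')$ to $\rig(b_h)$ would contradict optimality of $\pi^{\rig}$ in $P_h$ when prepended with the prefix of $\pi^{\rig}$. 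The two visibility lines are handled by Lemma~\ref{lem:no-intersection-impl-visible}: since $a'$ crosses $t_h$, the entire edge $b_h$ is visible from $a'$, hence each visibility line passes through an endpoint of $b_h$ and is tangent to the opposite shortest path; the unique tangent point is located by a single sweep along that convex chain.

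The main obstacle is the correctness of the Graham's-scan step for $\pi_0^{\lef}$, because on arbitrary input sequences the scan can produce self-intersecting chains. I would control this by leveraging the structure of the sequence $v_1, \dots, v_g$: these vertices traverse, in clockwise order, the chain opposite $a'$ of the simple subpolygon $Q \subseteq P_0'$ formed by the union of triangles that $a'$ crosses. I would then show three things: (i) $Q$ is a simple polygon bounded by a subsegment of $a'$ on one side and by $v_1, \dots, v_g$ on the other; (ii) the Euclidean shortest path in $Q$ from $v_1 = \lef(a')$ to $v_g = \lef(b_h)$ coincides with the shortest path in $P_0'$ between these points, since any detour leaving $Q$ could be shortened by pulling it against the common boundary; and (iii) Graham's scan on $v_1, \dots, v_g$ constructs exactly the unique outward convex chain through a subset of these vertices, hence this shortest path. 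Self-intersection is ruled out because the input is a simple polyline bounding the simple polygon $Q$, and each removal of a right bend replaces a non-convex sub-chain by a chord lying in $Q$, so every intermediate state remains a simple outward convex path inside $Q$.

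For the running time, extracting $\pi_0^{\rig}$ takes $O(1)$ given a pointer to $\rig(a')$ inside $\pi^{\rig}$; Graham's scan runs in $O(g)$ amortized time since each $v_i$ is appended once and removed at most once; and each tangent computation is linear in the length of its target chain. Since $g$, $|\pi_0^{\lef}|$, and $|\pi_0^{\rig}|$ are all $O(|P_0'|)$, the total time is $O(|P_0'|)$.
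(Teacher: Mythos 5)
Your decomposition (right path, visibility lines, then the Graham's-scan step for $\pi_0^{\lef}$) matches the structure of the problem, and your treatment of $\pi_0^{\rig}$ and of the two visibility lines is fine --- the paper disposes of these in the text preceding the lemma and devotes the proof entirely to $\pi_0^{\lef}$, which you correctly identify as the only delicate point. Where you diverge is in \emph{how} you justify that Graham's scan on $v_1,\dots,v_g$ yields a non-self-intersecting left shortest path: you argue directly about the sleeve $Q$ of triangles crossed by $a'$, whereas the paper reduces to the already-proven correctness of Algorithm~\ref{alg:first-window}. Concretely, the paper subdivides $P_0'$ at the points where $a'$ crosses triangulation edges, obtains an almost-triangulated polygon $P_0''$ (triangles and quadrangles), adds one diagonal per quadrangle, and observes that running Algorithm~\ref{alg:first-window} from the topmost sub-edge $a''$ of $a'$ visits exactly $v_1,\dots,v_g$ in its path-extension step --- which \emph{is} Graham's scan. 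Correctness is then inherited from the analysis of the first window.

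The gap in your version is exactly the step this reduction is designed to avoid. You rule out self-intersection ``because the input is a simple polyline bounding the simple polygon $Q$'' and because ``each removal of a right bend replaces a non-convex sub-chain by a chord lying in $Q$.'' The first clause is insufficient: the paper explicitly cites a counterexample showing that Graham's scan applied to the boundary chain of a simple polygon can produce self-intersecting output, so simplicity of the input sequence alone proves nothing. The second clause is the assertion to be proved, not an argument: you must show that the chord from the new tangent point to $v_i$ stays inside the region, and in the paper's analysis of Algorithm~\ref{alg:first-window} this is where the hourglass machinery enters (Lemmas~\ref{lem:visible-impl-no-intersection} and~\ref{lem:no-intersection-impl-visible}): the chord cannot escape across the left boundary without first crossing the right shortest path, which would contradict visibility. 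Your argument would close if you invoked that the whole of $P_0'$ is visible from the straight segment $a'$ (since $a'$ crosses every triangle of the sleeve and $b_h$ is fully visible), so that the hourglass argument applies verbatim --- or if you performed the paper's retriangulation reduction. As written, the decisive step is asserted rather than proven. A further, more minor imprecision: the boundary of $Q$ consists of triangulation edges on both sides rather than of a subsegment of $a'$ and the chain $v_1,\dots,v_g$, and $\lef(a')$ is an interior point of an edge of $P$ rather than a vertex of $Q$, so claim (i) needs restating before claim (ii) can be made rigorous.
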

\begin{proof}
  We mainly have to prove that the path $\pi_0^{\lef}$ obtained by
  applying Graham's scan on the sequence $v_1, \dots, v_g$ actually is
  the shortest path from $\lef(a)$ to $\lef(b_h)$ in $P_0'$ (which
  includes that it is not self-intersecting).  This can be seen by
  reusing arguments we made for computing the first window.  To this
  end, we reuse the triangulation we have for $P$ by placing new
  vertices where $a'$ crosses with triangulation edges; see
  Figure~\ref{fig:second-window}d.  Note that the resulting polygon,
  which we denote by $P_0''$, is almost triangulated, i.e., each face
  is a triangle or a quadrangle.  We can thus triangulate $P_0''$ by
  adding one new edge in each quadrangle as in
  Figure~\ref{fig:second-window}d.  Note that $a'$ is separated into
  several edges in $P_0''$; let $a''$ be the topmost of these edges
  (i.e., the last one in clockwise order).  Assume, we want to compute
  the minimum-link path from $a''$ to $b_h$ in $P_0''$.  First note
  that the triangle $t_h$ is clearly visible from $a''$.  Thus, our
  algorithm for computing the first window computes the shortest
  path from $\lef(a') = \lef(a'')$ to $\lef(b_h)$.  Note further that
  the vertices visited in
  Lines~\ref{line:extend-path-start}--\ref{line:extend-path-end} of
  Algorithm~\ref{alg:first-window} are the vertices $v_1, \dots, v_g$
  in this order.  Thus, Algorithm~\ref{alg:first-window} actually
  constructs the left shortest path by using Graham's scan on the
  sequence $v_1, \dots, v_g$.  It follows that directly applying
  Graham's scan to the sequence $v_1, \dots, v_g$ correctly computes
  the left shortest path in $P_0'$ from $\lef(a')$ to $\lef(b_h)$.
  Clearly, the running time of Algorithm~\ref{alg:later-window} is
  linear in the size of $P_0'$.
\end{proof}

We compute subsequent windows as shown before, until the last edge~$b$ is found. The actual minimum-link path~$\apath$ is obtained by connecting each window~$w(a)$ to its corresponding first edge~$a$ with a straight line~\cite{Sur86}.
Linear running time of the algorithm follows immediately from Lemma~\ref{lem:window-linear-time} and Lemma~\ref{lem:initialization-linear-time}.

\paragraph{Implementation Details.}

\begin{figure}[t]
\centering
\includegraphics[page=1]{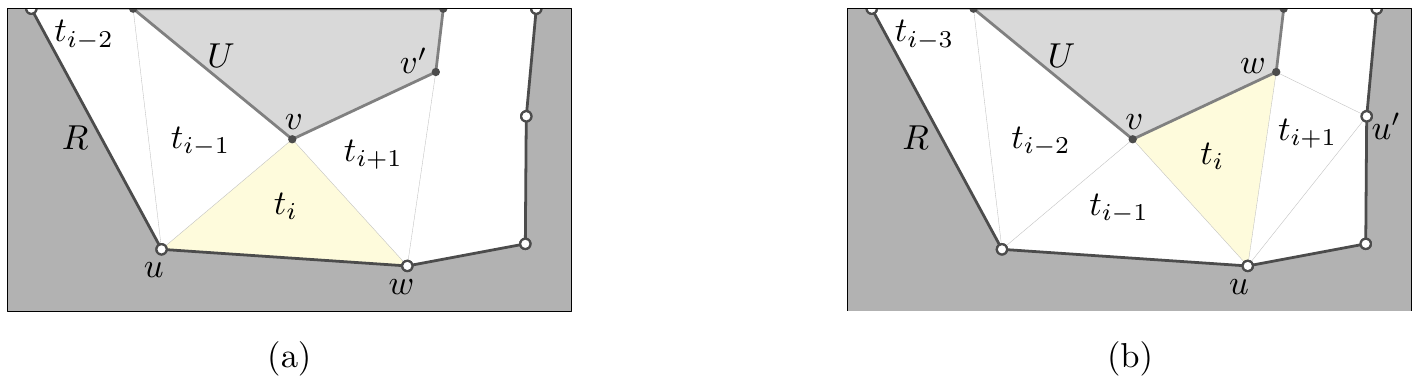}
\caption{(a)~The next important triangle $t_{i+1}$ shares with~$t_i$ the unique edge $vw$ that is not~$uv$ and has exactly one reachable endpoint. (b)~The next important triangle $t_{i+1}$ contains the edge~$uw$.}
\label{fig:important-triangle}
\end{figure}

To obtain the desired polygon that separates~$R$ and~$U$, we can connect the first and last segment of~$\apath$ along the boundary edge~$e$, as described above. However, to potentially save a segment and for aesthetic reasons, we first test whether the last window can be extended to the first segment of the path without intersecting the boundary of~$R$ or~$U$. This can be done by continuing the computation of the last window from~$t_0$ after~$b$ was found.
Moreover, we do not construct~$P$ and its triangulation~$T$ explicitly, but work directly on the triangulated input graph. The next important triangle is then computed on-the-fly as follows. Consider an important triangle~$t_i = uvw$, and
let $uv$ be the edge shared by the current and the previous important triangle; see Figure~\ref{fig:important-triangle}. Clearly, exactly one endpoint
of~$uv$ is part of the reachable boundary, so without loss of generality let~$u$ be this
endpoint. Then the next important triangle is the triangle
sharing~$vw$ with~$t_i$ if~$w$ is reachable, and the triangle
sharing~$uw$ with~$t_i$ otherwise. In other words, the next
triangle is determined by the unique edge that has exactly one
reachable endpoint.
Triangles are stored in a single array, similar to the data structure to store faces of the planar graph described in Section~\ref{sec:range_query} (although we do not need sentinels, since triangular faces have constant size). The minimum-link path algorithm operates on this data structure to obtain the sequence of important triangles on-the-fly.

%%%%%%%%%%%%%%%%%%%%%%%%%%%%%%%%%%%%%%%%%%%%%%%%%%%%%%%%%%%%%%%%%%%%%%%%%%%%%%%%
\section{Heuristic Approaches for General Border Regions}
\label{sec:heuristics}
%%%%%%%%%%%%%%%%%%%%%%%%%%%%%%%%%%%%%%%%%%%%%%%%%%%%%%%%%%%%%%%%%%%%%%%%%%%%%%%%

A border region~$B = R \cup U$ may consist of several unreachable components, \ie,~$|U| > 1$, while $|R| = 1$ always holds. In the general case, it is not clear whether one can compute a (non-intersecting) range polygon of minimum complexity that separates $R$ and $U$ in polynomial time~\cite{Gui93}. Therefore, we propose four heuristic approaches with (almost) linear running time (in the size of~$B$). Figure~\ref{fig:heuristics} shows example outputs of the different heuristics. Their quality is evaluated on realistic input instance in Section~\ref{sec:experiments}.

\begin{figure}[tb!]
  \centering
  \includegraphics{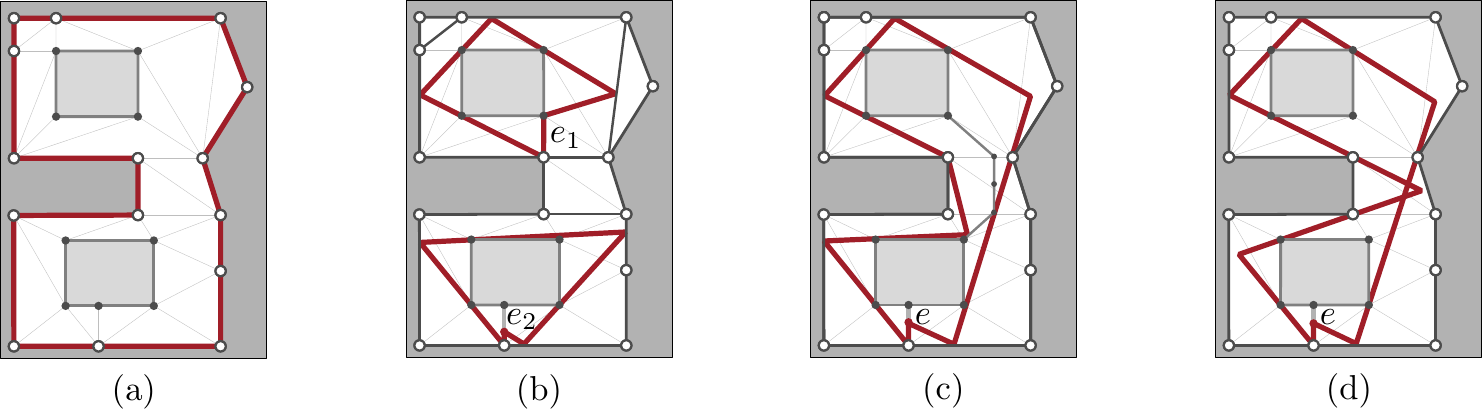}
  \caption{Example output (red) of the different approaches for a border region with two unreachable components. Minimum-link paths are computed from the bounding edge~$e$ in (b) and (d), and from $e_1$ and $e_2$ in (c). }
  \label{fig:heuristics}
\end{figure}

Given a border region~$B$, the approach presented in
Section~\ref{sec:heuristics:reachable_component_extraction} simply returns the reachable boundary~$R$ (Figure~\ref{fig:case_study_polygon_with_holes}a, Figure~\ref{fig:heuristics}a). Results are similar to previous algorithms for isochrones~\cite{Mar10}. Since the
complexity of the range polygons can become quite high, we
propose more sophisticated heuristics.
The basic idea of the approach
introduced in Section~\ref{sec:heuristics:triangle_restricted} is to
use the graph triangulation to separate~$B$ along edges for which both endpoints are in~$R$. The modified instances consist of single unreachable components that are separated from the reachable component by the algorithm in Section~\ref{sec:min_link_path} (Figure~\ref{fig:heuristics}b).
In
Section~\ref{sec:heuristics:connected_components}, we propose to insert new edges that connect the components of $U$ to create an instance with~$|U| = 1$. Then, we compute a minimum-link path in
the resulting border region as in
Section~\ref{sec:min_link_path} (Figure~\ref{fig:case_study_polygon_with_holes}b, Figure~\ref{fig:heuristics}c).
Finally,
Section~\ref{sec:heuristics:self_intersecting} details a heuristic that modifies the approach of Section~\ref{sec:min_link_path} to compute a possibly self-intersecting minimum-link path separating $R$ and~$U$ with minimum number of segments; see Figure~\ref{fig:heuristics}d.
Consequently, the resulting polygon has at most two more segments than an optimal solution.
We rearrange this polygon at intersections to obtain a range polygon without self-intersections.

\subsection{Extracting the Reachable Component}
\label{sec:heuristics:reachable_component_extraction}

Given a border region~$B$, this approach returns the reachable boundary~$R$. The resulting range polygon closely resembles the
results of known approaches, which essentially consist of extracting
the reachable subgraph~\cite{Gam12,Mar10}.
Note that this approach does not have to compute the unreachable boundary explicitly. We can improve its performance by modifiying the extraction of border regions described in Section~\ref{sec:range_query}, such that only the reachable part of the boundary is traversed.
In a sense, our modified extraction algorithm can be
seen as an efficient implementation of these previoues approaches~\cite{Gam12,Mar10}.
Its linear running time (in the size of~$B$) follows from the fact that we traverse every edge of~$R$ once, and every boundary edge or accessible edge of the planar graph~$\graph_\planar$ contained in~$B$ at most twice (the initial edge is visited twice, all other edges once). Clearly, the number of these edges is linear in the size of~$B$.

\subsection{Separating Border Regions Along their Triangulation}
\label{sec:heuristics:triangle_restricted}

The idea of this approach is as follows. For each border region~$B$, we consider its triangulation. We add all edges of the triangulation that either connect two reachable vertices or two unreachable vertices of~$\graph_\planar$ to~$B$, possibly splitting~$B$ into several regions~$B' = R' \cup U'$ (see thick edges separating the border region in Figure~\ref{fig:heuristics}b). For each region~$B'$, we obtain~$|U'| \le 1$, since two components of~$U$ must be connected by an edge of the triangulation or separated by an edge with two endpoints in~$R$. Then, we run the algorithm presented in Section~\ref{sec:min_link_path} on each instance~$B'$ with~$|U'| = 1$ to get the range polygon.
Linear running time follows, as we run the linear-time algorithm of Section~\ref{sec:min_link_path} on disjoint subregions of~$B$.

Clearly, the number of edges we add to~$B$ is not minimal, \ie, in general we could omit some of them and still obtain $|U'| \le 1$ for each region~$B'$. On the other hand, computing the set of separating edges described above is trivial, making our approach very simple.
In what follows, we describe how it is implemented without explicitly computing the border region~$B$. Instead, we use the set~$\edges_x$ to identify
border regions that need to be handled (recall that~$\edges_x$ contains all boundary edges and accessible edges of the border regions; see Section~\ref{sec:range_query}). We loop over all edges in this
set and check for each boundary edge~$(u,v) \in \edges_x$
whether it was already visited. If this is not the case, we start a
minimum-link path computation from the triangle left of this edge
(we ignore accessible edges, since they no longer
intersect the interior of the modified border regions).
The sequence of important triangles is
then computed on-the-fly as described in Section~\ref{sec:min_link_path}.
%
%Consider a triangle~$uvw$, and
%let $(u,v)$ be the edge shared by the current and the previous important triangle. Clearly, exactly one endpoint
%of~$(u,v)$ is reachable, so without loss of generality let~$u$ be this
%vertex.  \todo{figure}Then the next important triangle is the triangle
%containing the twin of~$(v,w)$ if~$w$ is reachable, and the triangle
%containing the twin of~$(u,w)$ otherwise. In other words, the next
%triangle is determined by the unique edge that has exactly one
%reachable endpoint.
%
%Triangles are stored in a single array, similar to the data structure to store faces of the planar graph described in Section~\ref{sec:range_query} (although we do not need sentinels, since triangular faces have constant size). Then, the implementation of the minimum-link path algorithm operates on this data structure to obtain the sequence of important triangles on-the-fly.
%
Whenever the algorithm passes a boundary edge in~$\edges_x$, it is marked as visited.

This heuristic can be seen as a simple but effective way of
producing border regions with a single unreachable
component. It is very easy to implement and even simplifies
aspects of the minimum-link path algorithm described in
Section~\ref{sec:min_link_path}, because the modified border regions
contain only important triangles (all other triangles were removed
from the modified border region; see Figure~\ref{fig:heuristics}b). Thus, computations for finding the second endpoint of a new
window and the initial visibility lines
are restricted to a single triangle, enabling
the use of simpler data structures.
On the other hand, the result of the algorithm heavily depends on the
triangulation of the input graph. In addition to that, the number of
modified regions~$B'$ can become quite large (see Section~\ref{sec:experiments}). The approach presented
in the next section therefore proposes a more sophisticated way to
obtain regions with a single unreachable component.

\subsection{Connecting Unreachable Components}
\label{sec:heuristics:connected_components}

This approach adds new edges to border regions with more than one unreachable component, such that they connect all unreachable components without intersecting the reachable boundary; see Figure~\ref{fig:heuristics}c. We obtain a modified instance~$B'$ with a single unreachable component and apply the algorithm from Section~\ref{sec:min_link_path}. Note that unreachable components can not always be connected by straight lines; see Figure~\ref{fig:connected_components}b.
For a similar (more general) scenario, Guibas et al.~\cite{Gui93} propose an approach to computes a subdivision that requires $\bigO(h)$ more segments than an optimal solution (where $h$ is the number of components in the input region).
Here, we propose a heuristic approach without any nontrivial
worst-case guarantee. However, it is easy to implement
and provides high-quality solutions in practice (see
Section~\ref{sec:experiments}).

\begin{figure}[t]
  \centering
  \includegraphics{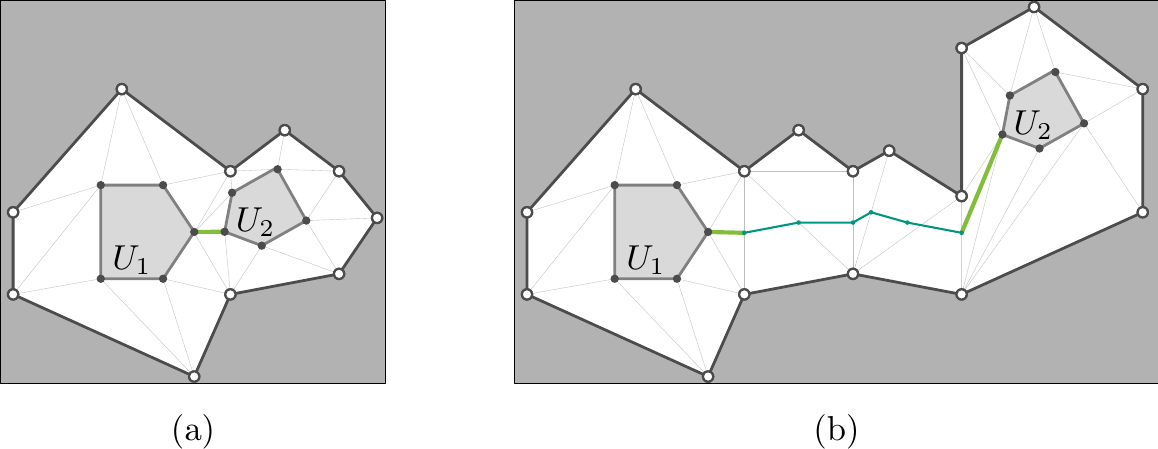}
  \caption{A border region $B$ with two unreachable components $U_1$ and $U_2$. (a)~Components $U_1$ and $U_2$ are connected by an edge (green) of the triangulation. (b)~The unreachable components $U_1$ and $U_2$ are connected by two connecting edges (green, thick) and several bridging edges (green, thin).}
  \label{fig:connected_components}
\end{figure}

Given a border region~$B$ whose unreachable boundaries~$U$ consist of several components, we describe our algorithm that computes a modified region~$B'$ with a single unreachable component.
It runs a breadth first search (BFS) on the dual
graph of the triangulation of~$B$ to find paths that connect the unreachable
components. Then, we add new edges that connect the components and
retriangulate the modified border region.

We distinguish two cases for connecting two unreachable
components~$U_1$ and $U_2$ in~$U$.
First,~$U_1$ and~$U_2$ may be
connected by a single edge of the triangulation (\ie, a edge that has
an endpoint in each component). Then, we can add this edge to~$B$ to
connect~$U_1$ and~$U_2$; see Figure~\ref{fig:connected_components}a.
Second, we have to deal with components that
are not connected by such an edge. In this case, there exists at least one
edge~$e$ in the triangulation of~$B$, such that both endpoints of~$e$
are on the reachable boundary, and $e$ separates $B$ into two
subregions containing~$U_1$ and~$U_2$, respectively. Hence, any
path connecting~$U_1$ and~$U_2$ in~$B$ crosses~$e$.
%
%Also, note that
%not even a straight line connecting the components may
%exist, Figure~\ref{fig:connected_components} shows an example.
%
Our goal is to find a short path in the dual graph of the triangulation that
connects~$U_1$ and~$U_2$. Then, we add new edges to the corresponding
sequence of triangles to connect~$U_1$ and~$U_2$ in~$B$; see Figure~\ref{fig:connected_components}b. Afterwards,
we locally retriangulate the modified part of~$B$.

\paragraph{Connecting Components.}

Our algorithm starts by checking for each pair of components, whether they are connected by a single edge in the triangulation of~$B$. This can be done in a sweep over the vertices of each unreachable component, scanning for each vertex its outgoing edges in the triangulation. Whenever an edge is found that connects two unreachable components, we merge these components and consider them as the same component in the further course of the algorithm (making use of a union-find data structure).

To connect all remaining unreachable components after this first step, we proceed as follows.
%
% \todo{Isn't this simply the weak dual?}Consider a modified dual graph
% of the triangulation of~$B$, where vertices are adjacent if and only
% if the respective (primal) triangles share an edge that is not
% contained in~$B$.
%
Consider the (weak) dual graph of the triangulation of~$B$.
Since no pair of the remaining unreachable components can be connected by a single edge in the primal graph, each triangle intersects at most one unreachable component. We assign a component to each dual vertex, namely, the reachable component if the corresponding triangle contains only reachable vertices, otherwise the unique unreachable component this triangle intersects.
For each unreachable component, we add a super source vertex to the dual graph that is connected to all vertices assigned to this component.
Now, our goal is to find a tree of minimum total length in this graph that connects all super sources, \ie, a minimum Steiner tree. Since this poses an \NP-hard problem in
general~\cite{Garey:1979:CIG:578533}, we propose a heuristic
search.
The basic idea is to iteratively add shortest paths between two sources that are not connected yet in a greedy fashion. This can by achieved by a multi-source variant of a BFS, which we now describe in detail.

Given an undirected graph~$\graph$ and a set of~$k$ source vertices, it keeps a list of vertex labels~$\ell(\cdot)$, to mark visited vertices, store their parents in the search, and the source vertex of the path that reached this vertex.
Moreover, we use a union-find data structure (with~$k$ elements) to maintain connectivity of sources.
Initially, we mark all source vertices as visited and set each as its own source. Moreover, all source vertices are inserted into a queue.
Then, the main loop is run until all source vertices are connected. In each step, the search extracts the next vertex~$u$ from the queue and checks all incident edges~$\{u,v\}$. If~$v$ was not visited, it marks its label as visited, and sets its source to the source vertex of the label of~$u$. Additionally,~$v$ is inserted into the queue. On the other hand, if~$v$ was visited, we check whether the sources of the labels $\ell(u)$ and~$\ell(v)$ are connected. If they are not, we found a path that connects both sources. Thus, we unify the sources (i.e., they are considered equal in the further course).
The actual path can be retrieved by backtracking from~$u$ and~$v$, respectively, following the parent pointers until the source is reached. The concatenation of both paths yields a path that connects two source vertices.
The algorithm stops when all sources are connected.

%It then computes a set of temporary vertices and a set of unpassable edges. Unpassable edges connect components of the corridor, and must not be crossed by minimum link polygons (\ie, they are treated as boundaries of the surrounding polygon).
%
%We run this search on the dual graph, initialized with the super sources of all unreachable components.
%
After the search terminates, we \emph{split} some triangles by adding new vertices and edges to~$B$ in the following manner; see Figure~\ref{fig:connected_components}b.
Consider a path in the dual graph connecting two components~$U_1$ and~$U_2$. First, we remove the first and last vertex of this path (since these are previously added super sources). For the remaining path, consider
the corresponding sequence of triangles in~$B$. Clearly, all but the
first and last triangle of this path only have endpoints in the
reachable component (otherwise, backtracking would have started or
stopped earlier).
For every edge shared by two triangles in this path, we add a \emph{bridging vertex} at the center of this edge.
Thus, a bridging vertex is always
contained in an edge with two reachable endpoints. 
Between any pair of bridging vertices contained in the same triangle, we add a \emph{bridging edge} connecting them.
Finally, at the first and
last triangle of the path, we add a \emph{connecting edge} from the unique endpoint that
belongs to an unreachable component to the bridging vertex.
Assigning all added vertices and edges to the unreachable boundary, the resulting border region~$B'$ contains a connected unreachable component $U' \supseteq U_1 \cup U_2$.

For correctness, we need to show that the union of all edges added according to the computed Steiner tree creates no crossings. First, note that bridging edges in a triangle (corresponding to different subpaths of the Steiner tree) never cross each other.
Second, we claim that if a connecting edge is inserted in some triangle, no other edge is added to that triangle.
Since a connecting edge has an endpoint in some unreachable component, the triangle contains at most one edge with two reachable endpoints. Hence, it contains at most one bridging vertex, and therefore no bridging edge. Moreover, the triangle contains at least one bridging vertex, which is the other endpoint of the connecting edge. Thus, it has exactly two reachable endpoints and does not contain more than one connecting edge.

Finally, we add new edges (if necessary) to any created quadrangles to maintain the triangulation. Thus, the resulting modified border region~$B'$ is triangular and its unreachable boundary consists of a single component.
We run the algorithm described in Section~\ref{sec:min_link_path} to obtain the desired range polygon.

The BFS described above visits each vertex of the dual graph at most once. In each step, the union-find data structure is called at most three times, to check whether sources of two given labels are connected and unify them if necessary. All other operations require constant time.
Using path compression for the union-find data structure~\cite{Tar75}, this yields a running time of~$\bigO(n \alpha(n))$ of the BFS, where $n$ is the number of vertices in the dual graph (which is linear in the size of~$B$) and $\alpha$ the inverse Ackermann function.
Since the remaining steps of the heuristic (adding vertices and edges to triangles, computing a minimum-link path) require linear time, the overall running time is almost linear.

\paragraph{Improvements.} In practice, the performance of the BFS is dominated by the number of visited vertices.
We propose tuning options that reduce this number significantly, without affecting correctness of the approach (although results may slightly change).

One crucial observation is that realistic instances of border
regions~$B$ often have an unreachable boundary consisting of one large component
(the major part of the unreachable subgraph), and
many tiny components (\eg, unreachable dead ends in
the road network), similar to Figure~\ref{fig:border_region}. Then, the search from the large component dominates the running time. Instead, we can run the
BFS starting from all but the largest component. This requires only
negligible overhead (we identify the largest component in an
additional sweep), but searches from small components are likely to
quickly converge to the large component. In preliminary experiments,
this reduced running time significantly.
Furthermore, after extracting the next vertex from the queue, we first check whether its source was connected to the largest component in the meantime. If this is the case, we prune the search at this vertex (because it now represents the search from the largest component).
Similarly, we omit sweeping over vertices of the largest component when checking for edges in the triangulation that connect two components (before running the BFS).

Going even further, we always expand the search from the component that is currently the smallest.
In its basic variant, the BFS uses a queue to process vertices in first-in-first-out order.
For better (practical) performance, we replace it by a priority queue whose elements are components (represented by source vertices) instead of vertices. Additionally, we maintain a queue for each component that stores the vertices (extracting them in first-in-first-out order).
In the priority queue, each component uses its complexity (\ie, its number of edges in the border region) as key.
In each step of the BFS, we check for the component with the smallest key in the priority queue, and extract the next vertex from the queue of this component. If it has run empty, we remove the component from the priority queue.
New vertices are always added to the queue that corresponds to their source
label.
%
%Component keys are initialized with the corresponding complexity in~$B$.
%
Keys in the priority queue are updated accordingly whenever components
are unified (i.e., we update the key of all affected components -- this requires linear
time in the number of contained components).
Note that the use of a priority queue together with this simple update routine increase the asymptotic running time of the BFS by a linear factor (in the number of components, which can be linear in the size of the border region). However, we observe a significant speedup in practice.

\paragraph{Data Structures and Implementation Details.}

When running a BFS on the dual graph of a border region~$B$, we implicitly represent the search graph using the triangulation of the graph~$\graph_\planar$. To determine incident edges of a dual vertex, we check the edges of its primal triangle. If the primal edge is not contained in~$\graph_\planar$ (\ie, it was added during triangulation), or if it is contained in~$\edges_x$, there exists an edge in the dual graph connecting the triangle to the twin triangle of this primal edge.
%
% \todo{can we have a place dedicated to these kind of
%   information?}Similar to the previous approach (see
% Section~\ref{sec:heuristics:triangle_restricted}), we store triangles
% in a cache-friendly structure.
%
%Note that we do not require sentinel vertices in case of triangles, since faces have a fixed size of 3.
%
%At each face vertex, we store its component ID (set during corridor extraction). As mentioned before, we implicitly represent component IDs using timestamps.
%
%Additionally, we maintain labels for all triangles, consisting of the component index, a parent pointer
%
We maintain flags at each triangle set during backtracking, to determine whether a bridging edge or a connecting edge should be inserted (and if so, between which pair of endpoints).
Backtracking is run on-the-fly during the
BFS, and stopped whenever we reach a previously split triangle (since this means
we have reached a previously computed path).
We build a list of all split triangles, for fast (sequential) access to all
triangles that were split after the BFS has terminated, in order to add the respective edges to the triangulation.

To avoid costly reinitialization of the vertex labels between queries, we make use of timestamps, implicitly encoded within the component indices to save space. After each query, the global timestamp is increased by the number of unreachable components of~$B$. When storing a component index in a label, it is increased by the global timestamp. Then, a label is invalid if this index is below the global timestamp. To retrieve the actual index of a valid label, we subtract the global timestamp.

%Finally, we use a FIFO queue and a union find data \todo{cite?} structure to efficiently maintain information on which components were merged. However, when using the advanced queue, we update the component ID of all (initial, atomic) components in a linear sweep. This counters the benefits the union find data structure, but since the number of components is typically small (at most a few hundred in our experiments), the overhead is negligible in practice.

Edges and vertices added to the border region~$B'$ are stored as temporary modifications in the triangulation of~$\graph_\planar$. To this end, we make use of the following data structures. Edges of the triangulation that are added to~$U'$ (to connect two unreachable components) are explicitly stored in a list. To quickly check whether some edge of the triangulation was added to~$U'$, we sort this list (\eg, by head vertex index) after the BFS terminated to enable binary search. In our setting (some 1\,000 inserted edges for the hardest queries), this was slightly faster than using hash sets.
To store the bridging edges and connecting edges, we temporarily modify the
triangulation. To this end, we add an invalidation flag and a
temporary index to the vertices of every triangle in the triangulation of~$\graph_\planar$.
Moreover, we maintain a list of temporary triangles. Each vertex
in a split triangle is marked as invalid and its temporary index is set to the corresponding entry in this list. Before retrieving a triangle vertex, we first check whether it is invalid and redirect to the temporary vertex if this is the case.
For faster reinitialization, we replace invalidation flags by timestamps, similar to component timestamps described above.
%
%Note that temporary vertices do not need to be stored explicitly. Instead, we generate their unique numerical index from the index of the edge that contains the splitting vertex. Since splitting edges are always inserted in the middle of existing edges, and every edge gets split by at most one vertex, this is sufficient to get unique identifiers.
%
When splitting triangles, we have to set the twins of all new edges.
If the twin triangle was not created yet, we
store the pending edge in a list. This list is searched for existing
twins whenever a new triangle is added. If a twin is found, we set
twins for both affected edges, and remove them from the set.
%
%Thus, at the end
%of the triangle splitting phase, the list is empty.

% Once we have set up this data, we are able to use it in the minimum-link path algorithm to determine the next important triangle, similar to Section~\ref{sec:heuristics:triangle_restricted}. Note that added splitting vertices are always treated as unreachable vertices.
% %
% when computing the intersection of a window and the (modified) border region~$B$, we determine whether an edge of the triangulation is part of~$B$ as follows. The edge is part of~$B$ if at least one of its endpoint is a splitting vertex, or if the edge is contained in~$\graph_\planar$ and either passable or unreachable.

% During region extraction (cf. Section TODO), we store for each boundary edge, its respective in-range corridor index (see above). With this information, we can additionally explicitly store the out-of-range components (\ie, their face datastructure indices). (see above for description)
% Corridor extraction:
% \begin{itemize}
%  \item During reachable corridor extraction: When marking a boundary edge as visited, we also store its reachable component index (so that unreachable components are assigned to the right corridor.)
%  \item Before extracting unreachable corridor, we retrieve the corresponding reachable component index. (So the corridor is written into the correct container.)
%  \item For self-intersecting queries, corridors are extracted one by one.
%  \item We store, for each corridor, one representative boundary edge, and all unreachable components (consisting of the face vertex IDs).
% \end{itemize}

\subsection{Computing Self-Intersecting Minimum-Link Paths}
\label{sec:heuristics:self_intersecting}

Our last approach computes a minimum-link path in~$B$ that separates the reachable boundary from the unreachable boundaries.
While the resulting polygon has at most~$\opt+2$ segments, it may intersect itself; see Figure~\ref{fig:heuristics}d.
To obtain a range polygon from a self-intersecting polygon, we rearrange it accordingly at intersections.

The remainder of this section focuses on computing minimum-link paths in border regions with several unreachable components. To achieve this, we have to make some modifications to the algorithm described in Section~\ref{sec:min_link_path}.
First, note that the (weak) dual graph of the triangulation of~$B$ is not outerplanar if~$|U| > 1$. Consequently, paths between (dual) vertices are no longer unique.
Instead, we require \emph{shortest} paths in the dual graph that separate reachable and unreachable boundaries.
In fact, some vertices may occur several times in such paths; see the corresponding sequence of triangles crossed by the polygon in Figure~\ref{fig:heuristics}d.
In what follows, we first show how to obtain the sequence of important triangles in this general case.
Then, we describe modifications necessary to retain correctness of the algorithm described in Section~\ref{sec:min_link_path} running on this sequence of important triangles.

\paragraph{Computing the Important Triangles.}

Given a boundary edge~$e$ of the border region~$B$, we are interested in a minimum-link path that connects both sides of~$e$ and separates the reachable boundary from all unreachable boundaries.
We compute a sequence~$t_1, \dots, t_k$ of triangles such that any minimum-link path with the above property must pass this sequence in this order.
Our approach runs in two phases. The first phase traverses the reachable boundary of~$B$ and lists all encountered boundary edges in the triangulation (\ie, all edges with one endpoint in each~$R$ and~$U$, even if they are not present in the input graph~$\graph_\planar$). Clearly, the minimum-link path must intersect all boundary edges in the same order (lest having unreachable components on both sides of the path). The second phase uses this information to compute the actual sequence of important triangles, consisting of shortest paths in the dual between pairs of consecutive boundary edges. It serves as input for the algorithm that computes a minimum-link path connecting both sides of~$e$ in~$B$.

During the first phase, we exploit the fact that the reachable boundary of~$B$ is always connected. We assign \emph{indices} to all edges in the triangulation of~$B$ contained in the border region that intersect the reachable boundary, according to the order in which they are traversed in each direction starting from~$e$; see Figure~\ref{fig:computing-important-triangle}.
For consistency, sides of edges that are not traversed get the index~$\infty$.
Clearly, this information can be retrieved in a single traversal of the reachable boundary, similar to the procedure described in Section~\ref{sec:heuristics:reachable_component_extraction}, but running on the triangulation of the border region.
%
%Since the same vertex~$\vertex$ (and similarly, the triangle contained it in the dual graph) may occur several times in this order, we assign indices to pairs of vertices and triangles to obtain a distinct rank for each occurrence of~$\vertex$.
%
%For consistency, all unreachable vertices get the index~$\infty$.
%
Moreover, during this traversal we collect an ordered list of indices corresponding to boundary edges. Observe that every boundary edge in~$B$ is traversed exactly once.
%except for the first edge~$e$, which appears twice in this list (with minimum and the maximum index, respectively).
%
%Observe that the indices of triangle endpoints and the ordered list of boundary edge indices can be retrieved in a single traversal of the reachable boundary (similar to the procedure described in Section~\ref{sec:heuristics:reachable_component_extraction}, but running on the triangulation of the border region).

The second phase runs on the dual graph of the triangulation and retrieves the desired sequence of triangles.
A key observation is that this sequence must pass all boundary edges exactly once and in increasing order of their indices.
Therefore, we can compute the sequence of important triangles as follows.
\begin{figure}
\centering
\includegraphics[page=1]{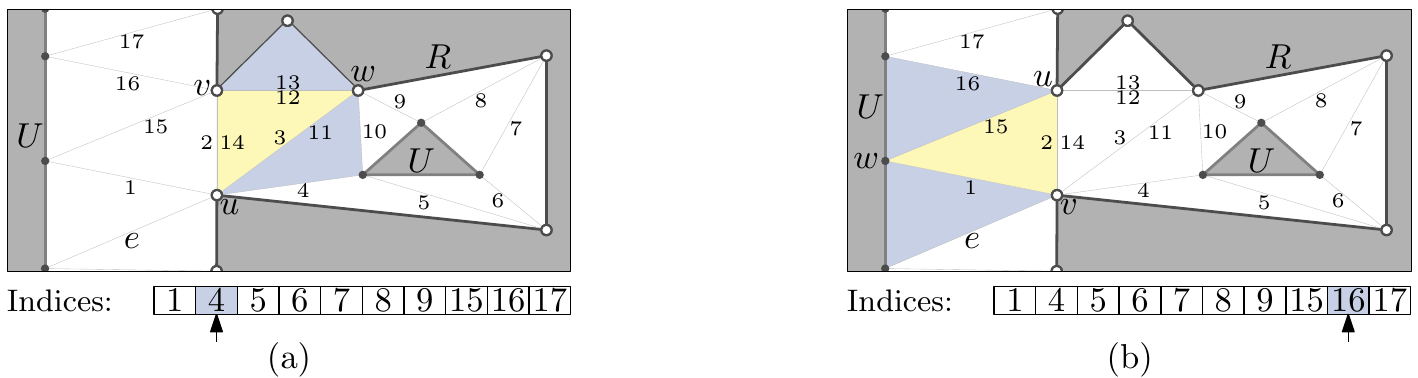}
\caption{Border region showing indices of edges when starting traversal at~$e$. Note that edges have two indices (one for each direction of traversal). Indices are~$\infty$ if not specified. The list of boundary edges is given in the array, indicating the next index with an arrow. (a) The third visited triangle in the second phase (shaded yellow) has two possible next triangles $t_{uw}$ and~$t_{vw}$ (shaded blue). The next triangle is $t_{uw}$, because the index of the edge~$vw$ with greater index (12) exceeds the next boundary edge index (4).
%
%Observe that the yellow triangle is visited a second time during the course of the algorithm. 
%
(b) The next triangle is $t_{uw}$. The index of the next boundary edge is updated to 16.}
\label{fig:computing-important-triangle}
\end{figure}
We maintain the index of the next boundary edge that was not traversed yet, initialized to the first element of the list. Starting at the triangle~$t_1$ containing the first boundary edge~$e$, we add triangles to the sequence of important triangles until~$e$ is reached again. Let~$t_i = uvw$ denote the previous triangle that was appended to this sequence. Then we determine the next triangle~$t_{i+1}$ as follows; see Figure~\ref{fig:computing-important-triangle}.
Let~$uv$ be the unique edge shared by~$t_i$ and~$t_{i-1}$ (in the case of~$i = 1$, we have $uv = e$).
%
%Let~$w$ be the third vertex of~$t_i$.
%
%Without loss of generality, let the index of $u$ be smaller than the index of~$v$ (\wrt~$t_i$). It follows that~$u$ is reachable (otherwise, $uv$ would be unpassable in~$\graph_\planar$ and hence, not a shared edge of~$t_{i-1}$ and~$t_i$).
%
%Without loss of generality, let $u$ be a reachable vertex of~$\graph_\planar$ (the edge $uv$ cannot be unpassable in~$\graph_\planar$, so at least one endpoint is reachable).
%
To determine the next triangle, we consider the two possible triangles $t_{uw}$ containing the edge $uw$ and $t_{vw}$ containing $vw$.
Without loss of generality, let the index of $uw$ be smaller than the index of~$vw$ (and thus, finite). This implies that $uw$ is not contained in the boundary of~$B$ (otherwise, it would have index~$\infty$).
If both $u$ and $w$ are part of the reachable boundary, we know that $uw$ separates~$B$ into two subregions; see Figure~\ref{fig:computing-important-triangle}a. Thus, $t_{uw}$ is the next triangle if and only if the subregion~$B_u$ containing~$t_{uw}$ contains a boundary edge
%
%(and thus, an unreachable component)
%
that was not passed yet.
Therefore, we continue with $t_{uw}$ if and only if the index of the other edge~$vw$ is greater than of the next boundary edge.
If either $u$ or $w$ is part of an unreachable boundary, $uw$ is the next boundary edge; see Figure~\ref{fig:computing-important-triangle}b.
%
%Hence, the next important triangle is~$t_{uw}$ if and only if the index of the next boundary edge is the index of~$uw$.
%
We update the index of the next boundary edge to the next element in the according list.

%Starting from an arbitrary boundary edge, we traverse the reachable boundary (similar to corridor extraction). We maintain a counter, initialized to 1. For each face vertex in the triangular datastructure that we visit, we set the \emph{rank} of this face vertex to the current counter, and increment the counter. Note that vertices of the input graph can occur several times on the border (due to, \eg, antennae), hence, we assign ranks to face vertices. Whenever we encounter a boundary edge, we append the current rank to a list of boundary edge ranks.

We continue until the first edge~$e$ is reached again. The resulting sequence of triangles is the desired shortest path.
Note that the second phase (traversing the dual graph) can be performed on-the-fly during minimum-link path computation (\ie, the sequence of triangles does not have to be built explicitly).

\paragraph{Computing Minimum-Link Paths.}

Given a sequence~$t_1, \dots, t_k$ of important triangles in a border region~$B$ computed as described above, we discuss how to compute a minimum-link path between both sides of~$e$. In particular, we show which modifications to the approach presented in Section~\ref{sec:min_link_path} are necessary at certain points to preserve correctness.

Consider the computation of a window from an arbitrary \emph{initial edge}~$b$ shared by two triangles in~$t_1, \dots, t_k$ as described in Section~\ref{sec:min_link_path}. Clearly, the subsequence of triangles that is visited until a window is found does not contain several occurrences of the same triangle, since this would imply that a straight visibility line intersects it at least twice. Consequently, the fact that triangles may appear several times in~$t_1, \dots, t_k$ does not affect window computation starting at an edge.
A similar argument applies when initializing the computation of a subsequent window. Recall that in this step, the visibility cone from the last window to the next initial edge is computed.
All triangles considered in this step are intersected by the last window in a certain order (see Section~\ref{sec:min_link_path}), so we do not encounter the same triangle twice (since windows are straight lines).

\begin{figure}[t!]
  \centering
  \includegraphics{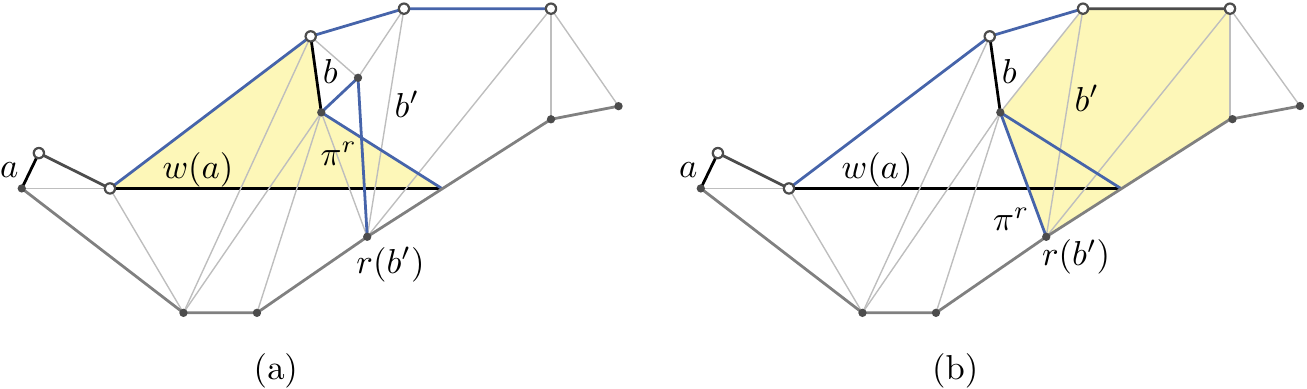}
  \caption{(a)~The shortest path~$\pi^r$ from the right endpoint of the previous window~$w(a)$ to the right endpoint of~$b'$ intersects itself. Note that there are two unreachable vertices that are not connected to the remaining unreachable boundary. (b)~The shortest path~$\pi_r$ from the right endpoint of~$w(a)$ to~$r(b')$ contains a left bend after passing an unreachable vertex that is not connected to the unreachable boundary.}
  \label{fig:self_intersecting_outward_path}
\end{figure}

However, the computation of the next window after this initialization step requires some modification, since triangles visited during initialization may reoccur when computing the window from the next edge.
As a result, the subpaths computed during initialization and when starting from this edge may intersect each other; see Figure~\ref{fig:self_intersecting_outward_path}a.
In this example, the subpath of the right shortest path~$\pi^{\rig}$ starting at the initial edge~$b$ intersects the segment from the right endpoint of~$w(a)$ to the right endpoint of~$b$.
Self-intersections would not pose a problem per se if we would generalize the definition of shortest paths to polygons with self-intersections. However, without modifications, Algorithm~\ref{alg:first-window} in Section~\ref{sec:min_link_path} may produce wrong results in certain special cases. Figure~\ref{fig:self_intersecting_outward_path}b shows such an example. While the shortest path from the right endpoint of~$w(a)$ to the right endpoint of~$b'$ does not intersect itself in this case, its second segment lies in the half plane to the left of the first segment. Hence, Algorithm~\ref{alg:first-window} will falsely remove the last bend. The resulting incorrect path consists of the single segment from the right endpoint of~$w(a)$ to~$r(b')$.
Clearly, this leads to the construction of an incorrect visibility cone.
We say that the last segments of the right shortest path shown in Figure~\ref{fig:self_intersecting_outward_path} are \emph{visibility-intersecting}, as they reach into the area that is visible from~$w(a)$.
Formally, a segment is visibility-intersecting if it intersects the interior of the hourglass~$H_0$ bounded by the previous window~$w(a) = a'$, the next initial edge~$b$, and the initial left and right shortest path~$\pi_{0}^{\lef}$ and $\pi_{0}^{\rig}$; see the shaded area in Figure~\ref{fig:self_intersecting_outward_path}a.
Note that visibility-intersecting segments can only occur in the shortest path that corresponds to the unreachable boundary (since the reachable boundary consists of a single component).

In what follows, we show how we can avoid visibility-intersecting segments that may spoil the algorithm presented in Section~\ref{sec:min_link_path}.
Clearly, visibility-intersecting segments only occur if a triangle that was visited during the initialization phase is visited again when computing the next window from a boundary edge.
We could resolve this issue conceptionally easy by retriangulating parts of the border region (namely, the part called~$P'_0$ in Section~\ref{sec:min_link_path}). Below, we present an approach that avoids retriangulation by making use of few simple checks instead. (In a sense, it simulates the situation after such a retriangulation.)
To this end, we show how we can easily detect visibility-intersecting segments. Then, we show that we can simply omit such segments from the corresponding shortest path.

We claim that a segment that is appended to the shortest path~$\pi^{\rig}$ is visibility-intersecting if and only if this segment intersects the previous window (and is not an endpoint of this window).
For the sake of simplicity, we assume general position. Thus, the window~$a$ and the path~$\pi^{\rig}$ share no common segment. In practice, such a segment can easily be detected and removed from both the path and the window during initialization of~$\pi_0^{\rig}$.

\begin{lemma}
Given the previous window~$w(a)$, let~$b$ be the next initial edge of $B$. Let~$t_i$ ($i \ge 1$) be an important triangle such that the shortest path~$\pi_{i-1}^{\rig}$ contains no visibility-intersecting segments and the edge~$b_i$ shared by~$t_i$ and~$t_{i+1}$ is (partially) visible from~$w(a)$.
The next segment~$s$ appended to~$\pi_{i-1}^{\rig}$ (\ie, the unique edge shared by~$t_i$ and the boundary of~$B$) is visibility-intersecting if and only if~$s$ intersects the open line segment~$w(a)$.
\end{lemma}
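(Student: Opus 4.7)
My plan is to analyse the four parts of the hourglass boundary $\partial H_0 = w(a) \cup b \cup \pi_0^{\lef} \cup \pi_0^{\rig}$ and to show that among these, only $w(a)$ can be crossed by the segment $s$. The equivalence in the lemma then follows: $s$ enters the interior of $H_0$ exactly when $s$ crosses $w(a)$.

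For the ``if'' direction, I would assume that $s$ meets the open segment $w(a)$. Under the general position assumption stated earlier, this meeting is a single transversal crossing, so in a neighbourhood of the crossing point $s$ has points strictly on both sides of $w(a)$. Since $w(a) \subset \partial H_0$ locally separates the interior of $H_0$ from its complement, one of these two sides places points of $s$ into the interior of $H_0$, so $s$ is visibility-intersecting.

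For the ``only if'' direction, I would assume that $s$ enters the interior of $H_0$ and deduce that $s$ must cross $\partial H_0$. Neither endpoint of $s$ lies in the interior of $H_0$: the endpoint attached to $\pi_{i-1}^{\rig}$ is outside by the hypothesis that $\pi_{i-1}^{\rig}$ contains no visibility-intersecting segments, and the other endpoint is a triangulation vertex on $\partial B$. Hence the crossing is nontrivial, and I would rule out the three non-$w(a)$ components of $\partial H_0$. The edge $b$ is a triangulation edge sharing at most a vertex with the triangulation edge $s$, so no interior crossing occurs there. The shortest paths $\pi_0^{\lef}$ and $\pi_0^{\rig}$ decompose into triangulation diagonals and boundary edges of $B$ --- a consequence of the correctness of Algorithm~\ref{alg:later-window} (Lemma~\ref{lem:initialization-linear-time}) together with the standard structure of shortest paths in a triangulated polygon --- so by the same planar-triangulation argument $s$ cannot cross either path in its interior. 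The only remaining component of $\partial H_0$ is $w(a)$, which proves the claim.

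I expect the main obstacle to be the justification that the segments of $\pi_0^{\lef}$ and $\pi_0^{\rig}$ still align with triangulation edges of $B$ in the multi-component setting of this section, where $B$ is not simply connected and the arguments of Section~\ref{sec:min_link_path} do not apply verbatim. A secondary subtlety is the degenerate case where an endpoint of $s$ coincides with a vertex of $\pi_0^{\rig}$, so that $s$ technically starts on $\partial H_0$ rather than strictly outside; the general position assumption combined with a short local analysis of which half-plane $s$ departs into, using the outward convexity of $\pi_0^{\rig}$ from Lemma~\ref{lem:visible-impl-no-intersection}, is intended to handle this. Modulo these two points, the proof reduces to a routine case distinction over the four boundary components of $H_0$.
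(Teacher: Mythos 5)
Your ``if'' direction matches the paper's and is fine. The ``only if'' direction, however, has a genuine gap at the step where you rule out crossings of $\pi_0^{\lef}$ and $\pi_0^{\rig}$. The premise that these paths ``decompose into triangulation diagonals and boundary edges of~$B$'' is false: they are geodesic (taut-string) paths whose segments are chords joining vertices that are in general not adjacent in the triangulation, and such a chord typically crosses many triangle edges. So the planar-triangulation argument that works for~$b$ does not transfer to the two shortest paths. Worse, the conclusion you aim for there is itself too strong: a visibility-intersecting segment genuinely can cross $\pi_0^{\rig}$ --- this is exactly what happens in Figure~\ref{fig:self_intersecting_outward_path}, where the offending edge crosses both $w(a)$ and the initial right shortest path. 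Any argument that excludes crossings with $\pi_0^{\rig}$ therefore proves something false.

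The paper closes this case differently. Both endpoints of~$s$ lie outside the open interior of~$H_0$ (the interior of an hourglass contains no vertices at all, which is a cleaner justification than the two separate reasons you give), so the portion of~$s$ inside~$H_0$ is bounded by two crossings of~$\partial H_0$. Crossings of~$b$ are excluded as you say. The key additional observation is concavity: since $\pi_0^{\lef}$ and $\pi_0^{\rig}$ are outward convex, the chord between two points on the same path lies outside~$H_0$, so the two bounding crossings cannot lie on the same shortest path. Hence if~$s$ avoids~$w(a)$ it must cross \emph{both} $\pi_0^{\lef}$ and $\pi_0^{\rig}$; but a crossing of $\pi_0^{\lef}$ by the (extended) right shortest path contradicts the visibility of~$b_i$ from~$w(a)$ via Lemma~\ref{lem:no-intersection-impl-visible}. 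That parity-plus-convexity step, which tolerates crossings of $\pi_0^{\rig}$ instead of forbidding them, is the missing idea in your proposal.
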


\begin{proof}
Note that we consider the open line segment~$w(a)$, since its right endpoint coincides with an endpoint of the first segment of~$\pi_{i-1}^{\rig}$, which clearly is not visibility-intersecting.

First, assume the segment~$s = pq$ is visibility-intersecting and assume for a contradiction that~$s$ does not intersect~$w(a)$.
Since $s$ is visibility-intersecting, it intersects the interior of the hourglass~$H_0$ enclosed by~$w(a)$, $b$, $\pi_0^{\rig}$ and~$\pi_0^{\lef}$.
Since the interior of~$H_0$ contains no vertices (in particular, neither~$p$ nor~$q$), the edge~$s$ of~$t_i$ must intersect the boundary of~$H_0$ at least twice.
However, $s$ does not intersect the interior of the edge~$b$, since both~$s$ and~$b$ are edges of triangles.
As~$s$ does not intersect~$w(a)$ by assumption, it intersects $\pi_0^{\rig}$ or~$\pi_0^{\lef}$.
Moreover, since both paths are concave in~$H_0$, $s$ must intersect both paths.
(If it intersects any path twice at two points~$p'$ and~$q'$, the subsegment that connects~$p'$ and~$q'$ does not intersect the interior of~$H_0$, so $s$ has at least one additional intersection with the boundary of~$H_0$.)
%
%Since~$\pi_0^{\rig}$ is concave in~$H_0$, the subsegment that connects~$p'$ and~$q'$ does not intersect the interior of~$H_0$. This implies that~$s$ has at least one additional intersection with the boundary of~$H_0$. However, it can intersect~$\pi_0^{\rig}$ at most twice, since it is concave.
%
But $s$ does not intersect~$\pi_0^{\lef}$, because this would imply that the paths~$\pi_i^{\rig}$ and~$\pi_0^{\lef}$ have non-empty intersection, contradicting the fact that~$b_i$ is visible from~$w(a)$.
%
%Hence, $s$ must intersect the interior of~$w(a)$, a contradiction.

Second, assume that~$s$ intersects the open segment~$w(a)$. Since~$w(a)$ contains no endpoint of~$s$, we know that~$s$ intersects the interior of~$H_0$. Hence,~$s$ is visibility-intersecting.
\end{proof}

Next, we show that visibility-intersecting segments can safely be omitted from the shortest path computed by the algorithm.
Let $t_i, \dots, t_j$ be a subsequence of important triangles, such that the edge of~$t_i$ appended to~$\pi_{i-1}^{\rig}$ by the algorithm of Section~\ref{sec:min_link_path} is visibility-intersecting, and~$t_j$ is the first triangle (\ie, with lowest index~$j > i$) such that the edge~$b_j$ shared by~$t_j$ and~$t_{j+1}$ does not intersect the open segment~$a$; see the sequence of shaded triangles in Figure~\ref{fig:self_intersecting_outward_path}b.
Thus, all edges~$b_i, \dots, b_{j-1}$ intersect~$a$. Moreover, $\pi_{i-1}^{\rig}$ and~$b_j$ lie on the same side of~$a$ (otherwise, the window~$a$ would cross the reachable boundary of~$B$).
We distinguish two cases, depending on whether~$b_j$ is visible from~$a$. We show that in both cases, we can skip all right endpoints of the edges~$b_i, \dots, b_{j-1}$ when updating the path~$\pi^{\rig}$ to obtain the correct window.

First, assume that $b_j$ is (partially) visible from~$a$. We claim that no right endpoint of an edge~$b_i, \dots, b_{j-1}$ is contained in the shortest path~$\pi_{j}^{\rig}$. To see this, let~$u$ denote the last vertex of~$\pi_{i-1}^{\rig}$ and~$w$ the right endpoint of~$b_j$.
%
%Clearly, the segment from~$u$ to~$w$ does not cross~$a$. Consequently, any path from~$u$ to~$w$ that contains an endpoint~$v$ in $b_i, \dots, b_{j-1}$ intersects~$a$ twice. This means that this path has both left and right bends, so it is not convex. Since~$w$ is the last vertex of~$\pi_{j}^{\rig}$, this implies that~$v$ is not contained in~$\pi_{j}^{\rig}$.
%
Clearly, $a$ separates $u$ and $w$ from all right endpoints of the edges~$b_i, \dots, b_{j-1}$. Since~$b_j$ is visible, this implies that the segment~$uw$ crosses all edges~$b_i, \dots, b_{j-1}$. Therefore, it does not intersect the boundary of~$B$ and there can be no shorter path from $u$ to $w$ passing any right endpoint of these edges.

Second, assume that $b_j$ is not visible from~$a$. We claim that no right endpoint~$v$ of an edge~$b_i, \dots, b_{j-1}$ is contained in the visibility cone from~$a$.
Assume for a contradiction that such an endpoint~$v$ is visible from~$a$. We know that~$v$ and the edge~$b_{i-1}$ lie on opposite sides of~$a$. Since $v$ is visible from~$a$, there exists a straight line that crosses $a$, $b_{i-1}$ and~$v$ in this order. Consequently, it must cross~$a$ twice, contradicting the fact that it is a straight line. Since~$v$ is not part of the visibility cone from~$a$, it is not relevant for the computation of the next window~$w(a)$.

In both cases, we can safely ignore right endpoints of all edges~$b_i, \dots, b_{j-1}$.
We adapt our algorithm as follows.
Before adding a segment to the shortest path that corresponds to the unreachable boundary, we check whether it intersects the previous window~$a$. If this is the case, we do not add it to the path.

\begin{figure}[t]
  \centering
  \includegraphics{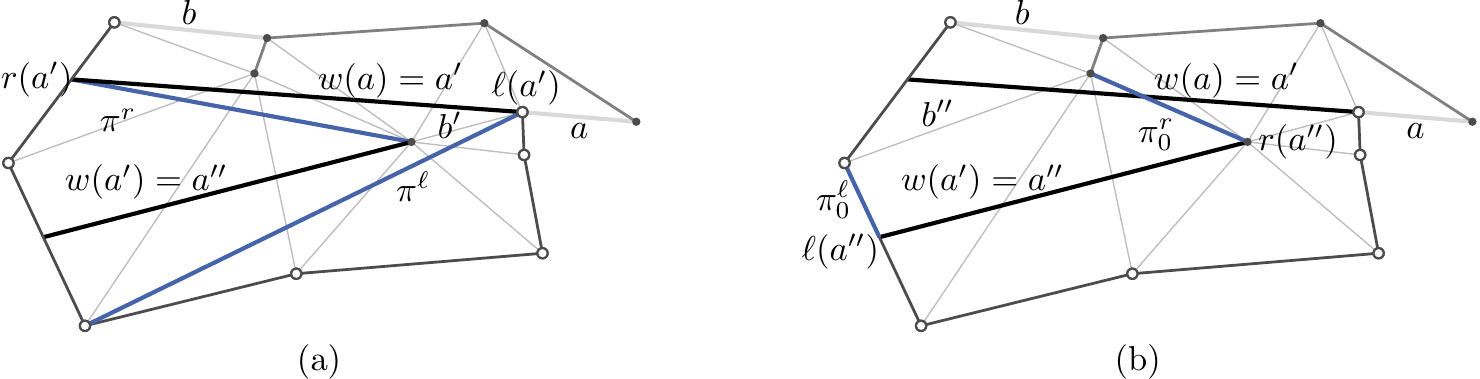}
  \caption{Computing a minimum-link path from $a$ to~$b$ that separates reachable (white) vertices from unreachable (black) vertices. (a)~The window~$w(a')$ computed after the first window $a'$, with (final) shortest paths drawn in blue. (b) The initial shortest paths $\pi_0^\ell$ and $\pi_0^r$ connecting the endpoints of~$a''$ and~$b''$ when computing the next window~$w(a'')$. }
  \label{fig:window-intersecting-path}
\end{figure}

Finally, applying these modifications, we have to clear one last issue to enable correct initialization of the computation of the next window.
Recall that during this initialization, we shortest path becomes the suffix of a previous path; see Section~\ref{sec:min_link_path}.
Figure~\ref{fig:window-intersecting-path} shows an example where this suffix is not available in the modified algorithm.
Assume we want to compute a minimum-link path from $a$ to~$b$. Figure~\ref{fig:window-intersecting-path}a depicts the first window~$w(a) = a'$. Starting from~$a'$, the shortest paths~$\pi^\rig$ and $\pi^\lef$ are computed to obtain the next window~$w(a')$ (note that the right endpoint of~$w(a')$ is an unreachable component consisting of a single vertex).
To compute the next window~$w(a'')$ from~$w(a') = a''$, we first have to compute the initial paths~$\pi_0^\lef$ and $\pi_0^\rig$, as shown in Figure~\ref{fig:window-intersecting-path}b. However, $\pi_0^\rig$ consists of a segment that is not present in the previous right shortest path, because it is visibility-intersecting for this path; see $\pi^\rig$ in Figure~\ref{fig:window-intersecting-path}a.

One way to resolve this problem is to maintain a shortest path~$\pi'$ starting at the corresponding endpoint of the initial edge (\ie, segments are added as in the original algorithm in Section~\ref{sec:min_link_path}). As argued before, this path does not contain self-intersections.
%We need $\pi_o^r$ to correctly initialize the next window.
%
% Figure~\ref{fig:window-intersecting-path} shows an example where we need~$\pi_o^r$ to obtain this suffix. Starting at the edge~$a$, the window~$w(a) = a'$ is computed before the window~$w(a')$ (the right endpoint of the latter window is an unreachable component consisting of a single vertex).
% %
% The initial path~$\pi_0^r$ for the next window~$w(a'')$, shown in Figure~\ref{fig:window-intersecting-path}b, consists of a segment that is not present in the previous right shortest path because it is visibility-intersecting for this path; see Figure~\ref{fig:window-intersecting-path}a. Thus, we use~$\pi_o = \pi_0$ to obtain the next initial path~$\pi_0$.
%
Then, the initial shortest path is a suffix of~$\pi'$, since the only segments omitted from~$\pi'$ are on the shortest path from the previous window~$w(a) = a'$ to the previous initial edge~$b'$. However, these segments cannot be part of the next initial shortest path, since~$b'$ is fully visible from~$a'$. Hence, the first endpoint of the window~$w(a')$ must be a point in~$\pi'$.

\paragraph{Final Remarks.}

In summary, our algorithm consists of two steps. The first step traverses the reachable boundary in the triangulation of~$B$. The second step runs a modified version of the algorithm described in Section~\ref{sec:min_link_path}, maintaining the next boundary edge index to compute the sequence of important triangles on-the-fly. Both steps are modifications of previous algorithms that clearly maintain their linear running time.
While the resulting polygon~$P$ may intersect itself, it has at most~$\opt + 2$ segments.
To obtain a range polygon $P'$ without self-intersections, we can split $P$ into several non-crossing polygons at intersections. From the resulting smaller polygons, we discard those that contain no vertices of~$\graph_\planar$ or are fully contained in another polygon.
To ensure that $P'$ consists of a single component, according to our primary optimization criterion in Section~\ref{sec:problem_statement}, we can reuse (partial) segments of~$P$ to connect the non-crossing components of~$P'$.
The number of additional segments clearly is linear in the number of self-intersections.

% \subsection{Final Remarks}
% 
% \textbf{Special handling of single-vertex unreachable components.} Easy, if it is extracted explicitly: We just count the number of distinct vertices (on-the-fly). If only the reachable corridor is extracted, we make use of the fact that the underlying graph is triangular: we remember the last seen vertex (initially null) that was incident to a boundary edge (\ie, an edge where we switch faces). Once we find a vertex that is not the last seen vertex (unless it is still null), the unreachable corridor has at least two vertices. Since the graph is triangular, we know that we visit at least two vertices of the corridor (unless it only has one). (Proof: If it has more than one component, it follows from the fact that there must be outgoing edges from each vertex. If it has only one component, ...)
% Finally, for restricted polygon queries, we simply test whether the unreachable vertex of the start boundary edge has at least one unreachable neighbor (in the triangular graph). If it does not, the (restricted) corridor is obviously a single-vertex corridor.

%%%%%%%%%%%%%%%%%%%%%%%%%%%%%%%%%%%%%%%%%%%%%%%%%%%%%%%%%%%%%%%%%%%%%%%%%%%%%%%%
\section{Experiments}\label{sec:experiments}
%%%%%%%%%%%%%%%%%%%%%%%%%%%%%%%%%%%%%%%%%%%%%%%%%%%%%%%%%%%%%%%%%%%%%%%%%%%%%%%%

We implemented all approaches in~C++, using~g++ 4.8.3 (-O3) as compiler. Experiments were conducted on a single core of a 4-core Intel Xeon E5-1630v3 clocked at 3.7\,GHz, with 128\,GiB of DDR4-2133 RAM, 10\,MiB of L3, and 256\,KiB of L2 cache.

\paragraph{Input Data.}

% Input graph: 22198628 vertices, 51088095 edges.
%  + 4 vertices for bounding box.
%  + 8 edges for bounding box.
% 
% Planarization: + 293741 vertices
% Planarization: - 654757 split edges, +1591898+16 inserted edges, 6281+11 of them multi edges.
% 
% Planar graph: 22492373 nodes, 52025260 arcs
% 
% Triangular graph: 22492373, 131977242 arcs
% 
% Split edge data: (654757 split edges, 1596297 added edges = inserted edges + all edges incident to original vertices on an intersection, 4383 such edges -- TODO update this number.)

We evaluate our approaches on a graph representing the road network of Europe, kindly provided by~PTV~AG (\url{ptvgroup.com}).
We extract travel times on road segments from the provided data. This enables us to generate edge weights for computing isochrones.
Energy consumption data for electric vehicles is derived from a realistic energy consumption from a detailed micro-scale emission model~\cite{hrzl-efmph-09}, calibrated to a real Peugeot iOn. It has a battery capacity of 16\,kWh, but we also consider batteries with 85\,kWh, as in high-end Tesla models. Edges for which no reasonable energy consumption can be derived (\eg, due to missing elevation data) are removed from the graph~\cite{Bau13}.
Note that due to recuperation, energy consumption values can become negative for some edges. To ensure that the battery limit is never exceeded, we apply battery constraints~\cite{Bau13} when computing reachable vertices.

The resulting graph is the largest strongly connected component of the remaining subgraph. It has 22\,198\,628 vertices and 51\,088\,095 edges.
To improve spatial locality of the input data, we reorder these vertices according to a vertex partition of the graph~\cite{Bau13}. This slightly improves query performance.
As mentioned in Section~\ref{sec:problem_statement}, we add four bounding box vertices in each corner of the embedding, along with eight edges connecting each vertex to the closest vertex of the input graph and the two closest bounding box vertices.
During planarization, 293\,741 vertices are added and 654\,765 edges are split.
Note that a dummy vertex may intersect more than two original edges, which explains why the number of split edges is more than twice the number of dummy vertices.
They are replaced by 1\,591\,914 dummy edges (creating 6\,294 multi-edges due to overlapping original edges in the given embedding).
After planarization, the resulting graph has 22\,492\,373 vertices and 52\,025\,261 edges.
After triangulating all faces, it has 131\,977\,245 edges in total.

\paragraph{Evaluating Queries.}

\begin{table}[tb!]
\centering
\caption{Results of our algorithms when computing the range of an electric vehicle, for different ranges. We report, for each algorithm and scenario, the number of components of the resulting range polygon~(Cp.), the complexity of the range polygon~(Seg.), the number of self-intersections (Int.), as well as the running time of the algorithm in milliseconds (Time). All figures are average values over 1\,000 random queries. } 
\label{tbl:range}
\setlength{\tabcolsep}{1.5ex}
% latex table generated in R 3.2.2 by xtable 1.8-0 package
% Wed Jan 27 11:17:04 2016
\begin{tabular}{lrrrrcrrrr}
  \toprule
  & \multicolumn{4}{c}{EV,\,16\,kWh} && \multicolumn{4}{c}{EV,\,85\,kWh} \\
 \cmidrule(lr){2-5}\cmidrule(lr){7-10} Algorithm & Cp. & Seg. & Int. & Time && Cp. & Seg. & Int. & Time \\
 \midrule
\algoExtractReachableComponent & 41 & 19\,396 & --- & 4.50 &  & 131 & 92\,554 & --- & 9.46 \\ 
  \algoTriangularSeparators & 69 & 610 & --- & 4.30 &  & 219 & 1\,973 & --- & 7.78 \\ 
  \algoConnectComponents & 41 & 561 & --- & 10.15 &  & 131 & 1\,820 & --- & 25.11 \\ 
  \algoSelfIntersections & 41 & 549 & 4.79 & 7.52 &  & 131 & 1\,781 & 15.06 & 22.25 \\ 
   \bottomrule
\end{tabular}

\end{table}

We evaluate query scenarios for range visualization of an electric vehicle, as well as isochrones. For electric vehicles, we consider two scenarios, one for medium~(16\,kWh) and one for large~(85\,kWh) battery capacity, corresponding to a range of roughly 100 and 500\,km, respectively.
We compare the results provided by the algorithms proposed in Section~\ref{sec:heuristics}. Each algorithm was tested on the same set of 1\,000 queries from source vertices picked uniformly at random. We denote by~\algoExtractReachableComponent (\emph{range polygon}, extracted \emph{reachable components}) the approach presented in Section~\ref{sec:heuristics:reachable_component_extraction}, by~\algoTriangularSeparators (\emph{triangular separators}) the algorithm from Section~\ref{sec:heuristics:triangle_restricted}, by~\algoConnectComponents (\emph{connecting unreachable} components) the approach from Section~\ref{sec:heuristics:connected_components}, and by~\algoSelfIntersections (\emph{self intersecting} polygons) the algorithm from Section~\ref{sec:heuristics:self_intersecting}.
Below, we only evaluate Steps 2 to 4 of the algorithm outlined in Section~\ref{sec:problem_statement}. The first step, \ie, the computation of the reachable and unreachable parts of the graph, was covered by previous work~\cite{bbdw-fcirn-15}. It is briefly discussed at the end of this section.

\begin{table}[tb!]
\centering
\caption{Overview of the results of our algorithms when computing isochrones for different ranges. Reported results are similar to Table~\ref{tbl:range}. They were obtained by running the same set of 1\,000 random queries} 
\label{tbl:isochrones}
% latex table generated in R 3.2.2 by xtable 1.8-0 package
% Wed Jan 27 11:17:03 2016
\begin{tabular}{lrrrrcrrrr}
  \toprule
  & \multicolumn{4}{c}{Isochrones,\,60\,min} && \multicolumn{4}{c}{Isochrones,\,500\,min} \\
 \cmidrule(lr){2-5}\cmidrule(lr){7-10} Algorithm & Cp. & Seg. & Int. & Time && Cp. & Seg. & Int. & Time \\
 \midrule
\algoExtractReachableComponent & 53 & 22\,458 & --- & 4.75 &  & 231 & 238\,123 & --- & 20.25 \\ 
  \algoTriangularSeparators & 151 & 1\,076 & --- & 4.65 &  & 694 & 4\,981 & --- & 14.96 \\ 
  \algoConnectComponents & 53 & 913 & --- & 12.11 &  & 231 & 4\,208 & --- & 65.09 \\ 
  \algoSelfIntersections & 53 & 881 & 9.95 & 8.70 &  & 231 & 4\,055 & 45.80 & 51.94 \\ 
   \bottomrule
\end{tabular}

\end{table}

Table~\ref{tbl:range} shows an overview of the results of all heuristics when visualizing the range of an electric vehicle, organized in two blocks. The first considers the medium-range scenario, while the second shows results for large ranges. For each, we report the average number of components, complexity, and the number of self-intersection of the computed range polygons. We also report the average running time in each case.
For~\algoSelfIntersections, the number of components and the complexity are reported as-is after running the modified minimum-link path algorithm described in Section~\ref{sec:heuristics:self_intersecting} (\ie, resulting polygons have the number of self intersections reported in the table). Thus, figures slightly change after resolving the intersections (both the number of components and the complexity may increase).
%
%On the other hands, the self-intersecting polygons provide an $\opt + 2$ approximation \wrt the complexity of each component of the range polygon. This allows us to bound the average error of all approaches.
%
We see that all algorithms perform very well in practice, with timings of 25\,ms and below even for large battery capacities. The simpler algorithms,~\algoExtractReachableComponent and~\algoTriangularSeparators are faster by a factor of~2 to~3, compared to the more sophisticated approaches. On the other hand, we see that range polygons generated by~\algoExtractReachableComponent have a much higher complexity, exceeding the optimum by more than an order of magnitude. The heuristic~\algoTriangularSeparators provides much much better results in terms of complexity, but is still outperformed by the other two approaches. Moreover, the triangular separation increases the number of components by almost a factor of $1.7$ (all other approaches in fact compute the minimum number of holes).
Regarding the two more involved approaches,~\algoConnectComponents and~\algoSelfIntersections, we see that the additional effort pays off. Both approaches compute range polygons with the optimal number of components, while keeping the complexity close to the optimum. In fact, we know that each component in the possibly self-intersecting polygon computed by~\algoSelfIntersections requires at most two additional segments compared to an optimal solution. Taking into account that many small components are triangles (which clearly have the optimal complexity), we derive lower bounds on the optimal average complexity of 529 (16\,kWh) and 1\,720 (85\,kWh) for a range polygon with minimum number of components. Hence, our experiments indicate that the average relative error of both~\algoConnectComponents (6\,\%) and~\algoSelfIntersections (4\,\%) is negligible practice. The number of intersections produced by~\algoSelfIntersections is also rather low, but the majority of computed range polygons contains at least one intersection (97.2\,\% for a range of 85\,kWh; not reported in the table).

In Table~\ref{tbl:isochrones}, we provide the according figures for isochrones, considering a harder scenario with isochrones for a range of 500 minutes (roughly eight hours). Resulting isocontours are among the largest on average in our setting (for larger ranges, the border of the network is reached by many queries).
Despite the increase in running time and solution size, we make similar observations as before. All approaches show great performance, with average running times of 65\,ms and below in all cases. Again, the average complexity of range polygons computed by~\algoExtractReachableComponent is larger compared to other heuristics by about a factor of~50, with range polygons consisting of more than 200\,000 segments on average for large ranges.
This clearly justifies the use of our proposed algorithms, since a significant decrease of this number has many advantages when efficient rendering or transmission over mobile networks is an issue. Moreover, a lower number of segments leads to a more appealing visualization for ranges of this order.
For~\algoTriangularSeparators, the number of components now exceeds the optimum by about a factor of~3. Again, the two more sophisticated approaches \algoConnectComponents and~\algoSelfIntersections yield best results with average relative errors bounded by 7\,\% and 3\,\%, respectively. 

\begin{table}[tb!]
\centering
\caption{Running times of different phases of the algorithms (where applicable) for isochrones with a range of 500 minutes. For each algorithm, we report the total running time (Total) together with the running time for transferring the the input to the planar graph (TP), extracting the border regions (BE), connecting components (CC), the range polygon computation with minimum-link paths (RP), and the test for self-intersections~(SI). All timings are given in milliseconds.} 
\label{tbl:phases}
% latex table generated in R 3.2.2 by xtable 1.8-0 package
% Wed Jan 27 11:17:04 2016
\begin{tabular}{lrrrrrr}
  \toprule
  Algorithm & TP & BE & CC & RP & SI & Total \\
 \midrule
\algoExtractReachableComponent & 8.21 & 12.01 & --- & --- & --- & 20.25 \\ 
  \algoTriangularSeparators & 8.22 & --- & --- & 6.45 & --- & 14.96 \\ 
  \algoConnectComponents & 8.23 & 26.66 & 22.99 & 7.81 & --- & 65.09 \\ 
  \algoSelfIntersections & 8.20 & 31.79 & --- & 9.53 & 2.34 & 51.94 \\ 
   \bottomrule
\end{tabular}

\end{table}

For the hardest scenario considered so far (isochrones, 500\,min.), we report more detailed information on running times of the different phases of all algorithms in Table~\ref{tbl:phases}. Note that the total running time slightly differs from the sum of all subphases, since it was determined independently.
The planarization phase~(TP) consists of the linear sweeps described in Section~\ref{sec:range_query}. Since the same work needs to be done for all approaches, the running time is identical in all cases (except for noise in the measurement). Of course, the relative amount spent in this step differs per algorithm. In case of~\algoTriangularSeparators it requires more than half of the total running time.
The time to extract the border regions~(BE) applies to all algorithms except~\algoTriangularSeparators, where this is done implicitly by checking the reachability of vertices. Since~\algoExtractReachableComponent extracts only the reachable boundary, this phase takes less than half the time compared to~\algoConnectComponents (the unreachable boundary is typically larger). Finally,~\algoSelfIntersections spends most time in this step, since it runs the extraction on the triangulated graph, which is significantly more dense (recall that~\algoSelfIntersections in fact only extracts the reachable border, similar to~\algoExtractReachableComponent, so this phase is slower by more than a factor of~$2.5$).
Connecting unreachable components (CC) is only necessary in~\algoConnectComponents and takes less time than the extraction of border regions.
Computing the actual range polygon takes a similar amount of time for all approaches that run this phase (6 to 10\,ms). For~\algoTriangularSeparators, it is slightly faster, since the algorithm works only on important triangles, reducing the number of visited triangles and simplifying the algorithm (see Section~\ref{sec:heuristics:triangle_restricted}). On the other hand,~\algoSelfIntersections is the slowest approach in this phase. This can be explained by the overhead caused by the modifications described in Section~\ref{sec:heuristics:self_intersecting}. Moreover, in contrast to \algoTriangularSeparators and~\algoConnectComponents, there are no artificial edges in the border regions. Hence, windows computed by~\algoSelfIntersections are longer on average, increasing the number of triangles visited by the algorithm.

In summary, we see that extracting the border region takes a major fraction of the total effort for all approaches that compute~$B$ explicitly. Despite the algorithmic simplicity of this phase, this can be explained by the size of the border region. In our experiments, it consisted of more than 500\,000 segments on average. On the other hand, only a fraction of the triangles in the border regions are visited by the minimum-link path algorithm.

\paragraph{Evaluating Scalability.}

\begin{figure}[tb!]
 \centering
 \input{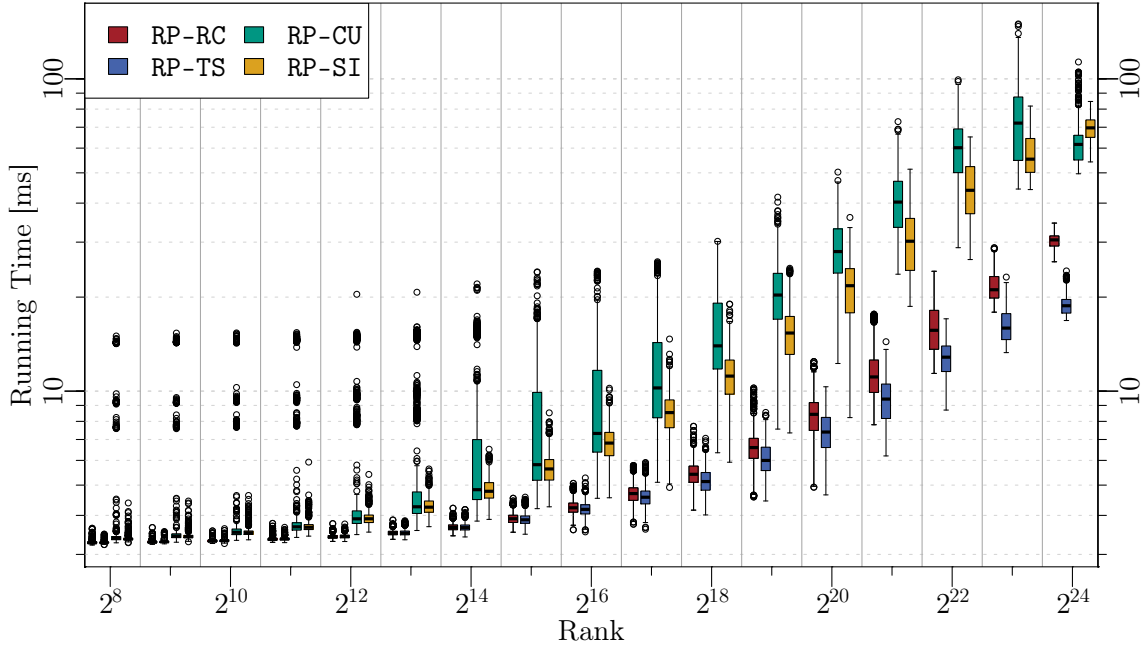}%
 \caption{Running times of all approaches subject to Dijkstra rank. Smaller ranks indicate queries of shorter range. For each rank, we report results of 1\,000 random queries.}
 \label{fig:tt-rank-boxplot}
\end{figure}

Figure~\ref{fig:tt-rank-boxplot} analyzes the scalability of our algorithms, following the methodology of Dijkstra ranks~\cite{Bas14}. The Dijkstra rank of a shortest-path query is the the number of queue extractions performed by Dijkstra's algorithm (assuming that the algorithm stops once the target was found). Thus, higher ranks reflect harder queries. To obtain queries of different ranks, we ran 1\,000 queries of the modified version of Dijkstra's algorithm from Section~\ref{sec:range_query}, with infinite range and from sources chosen uniformly at random. For a query from some source~$s$, consider the resource consumption~$c$ of the corresponding vertex that was extracted from the queue in step~$2^i$ of the algorithm (the maximum rank is bounded by the graph size). We consider a query from~$s$ with range~$c$ as a query of rank~$2^i$. For each rank in~$\{2^1, \dots, 2^{\log |V|}\}$, we evaluate the 1\,000 queries generated this way.

We see that query times of all approaches increase with the Dijkstra rank, which correlates well with the complexity of the border region. Moreover, scaling behavior is similar for all approaches. In accordance to our theoretical findings, it increases linearly in the size of the border region for queries beyond a rank of $2^{12}$. For queries of lower rank, transferring the input to the planar graph dominates running time (which is linear in the graph size). The approach~\algoTriangularSeparators is consistently the fastest approach on average for all ranks beyond $2^{16}$. However, except for a few outliers, all queries run in well below 100\,ms. For more local queries (\ie, smaller ranges), query times are much faster (20\,ms and below if the rank is at most $2^{20}$, corresponding to about a million vertices visited by Dijkstra's algorithm).
Interestingly, the more expensive approaches have a higher variance and produce more outliers, which is explained by their more complex phases. For example, the performance of the BFS used by~\algoConnectComponents heavily depends on how close unreachable components of the border region are in the dual graph.

\paragraph{Evaluating the Computation of Minimum-Link Paths.}

\begin{table}[tb!]
\centering
\caption{Running times of the minimum-link path algorithm. For each scenario, we report the number of segments in the input polygon ($|P|$), the number of triangles visited by the algorithm (v.\,Tr.), the number of links of the computed path (Seg.), and the running time in milliseconds, each averaged over 1\,000 queries.} 
\label{tbl:mlp}
\setlength{\tabcolsep}{1.5ex}
% latex table generated in R 3.2.2 by xtable 1.8-0 package
% Wed Jan 27 11:17:04 2016
\begin{tabular}{lrrrr}
  \toprule
  Scenario & $|P|$ & v.\,Tr. & Seg. & Time \\
 \midrule
EV, 16 kWh & 134\,049 & 9\,334 & 416 & 0.72 \\ 
  Iso, 60 min & 135\,112 & 11\,965 & 701 & 1.03 \\ 
  EV, 85 kWh & 357\,335 & 33\,030 & 1\,329 & 3.14 \\ 
  Iso, 500 min & 637\,224 & 69\,398 & 3\,204 & 6.57 \\ 
   \bottomrule
\end{tabular}

\end{table}

We take a closer look at the performance of our algorithm for computing minimum-link paths introduced in Section~\ref{sec:min_link_path}. To properly evaluate the algorithm in the context of our experimental setting, we proceed as follows. For each query, we consider the largest corresponding border region (\wrt number of segments). To obtain a polygon without holes (as required by the algorithm), we first run our heuristic to connect all unreachable components described in Section~\ref{sec:heuristics:connected_components}. Then, we add an arbitrary boundary edge to the modified border region, and compute a minimum-link path that connects both sides of this boundary edge. Results are shown in Table~\ref{tbl:mlp}. Each scenario is based on the respective sets of random queries used in Tables~\ref{tbl:range} and~\ref{tbl:isochrones}.

Clearly, each scenario represents a certain level of difficulty, with the average complexity of the input polygon ranging from some 100\,000 to 600\,000 segments. Clearly, the number of visited triangles, the number of segments of the resulting path, and the running time increase with the complexity of the input. However, we also see in Table~\ref{tbl:mlp} that the algorithm performs excellently in practice, computing minimum-link paths in less than 7 milliseconds, even for input polygons with more than half a million vertices.
Somewhat surprisingly, the isochrone scenarios (60\,min) is slightly harder then the range scenario (16\,kWh), despite a similar input complexity.
This can be explained by the different shape of the respective border regions.
Isochrones in road networks reach further on highways and other fast roads, leading to spike-like shapes in the resulting border regions. On the other hand, isocontours representing the range of an electric vehicle typically have a more circular shape (highways allow to move faster, but consume more energy). Consequently, range polygons for isochrones require more segments and yield the more challenging scenario.

\paragraph{Computation of the Reachable Subgraph.}

In all scenarios considered above, we ignored the computational effort to obtain the reachable subgraph (Step~1 of the general approach in Section~\ref{sec:problem_statement}). We proposed a variant of Dijkstra's algorithm to achieve this in Section~\ref{sec:range_query}. However, its performance would be the bottleneck of the overall running time in all scenarios. For large ranges, it takes several seconds on average. Similar observations for variants of Dijkstra's algorithm running on large-scale road networks were made in the context of speedup techniques for shortest-path algorithms~\cite{Bas14}. Recently, it was shown that algorithms producing output that is similar to Step~1 can be made practical using preprocessing-based techniques~\cite{bbdw-fcirn-15}. As a result, running times below 50 milliseconds can be achieved for this step (and even less when parallelized).
Since these techniques can easily be adapted to produce the output of Step~1 required by our algorithm, our approaches enable the visualization of isocontours in road networks in less than 100 milliseconds in total. Thereby, we enable interactive applications even on road networks of continental scale.

%%%%%%%%%%%%%%%%%%%%%%%%%%%%%%%%%%%%%%%%%%%%%%%%%%%%%%%%%%%%%%%%%%%%%%%%%%%%%%%%
\section{Conclusion}\label{sec:conclusion}
%%%%%%%%%%%%%%%%%%%%%%%%%%%%%%%%%%%%%%%%%%%%%%%%%%%%%%%%%%%%%%%%%%%%%%%%%%%%%%%%

In this work, we proposed several approaches for computing isocontours in large-scale road networks.
Given the subgraph that is reachable within the given resource limit, this problem boils down to computing a geometric representation of this subgraph.
We introduced range polygons, following three major objectives, namely, yielding exact results, low complexity, and practical performance on large inputs.
To this end, we adapted previous approaches to our scenario and presented three novel algorithms, all of which use as key ingredient a new linear-time algorithm for minimum-link paths in simple polygons.
Our experimental evaluation clearly showed that our approaches are suitable for interactive applications on inputs of continental scale, providing solutions of almost optimal complexity.

Regarding future work, there are several interesting open issues and room for further improvements.
First, our techniques exploit the fact that the reachable boundary of a border region is always connected, \ie, $|R| = 1$. This might not be the case in related scenarios, such as multi-source isochrones or in multimodal networks, where one can disembark public transit vehicles only at certain points~\cite{Gam11}. Thus, it would be interesting to know whether our approaches can be extended to the case of~$|R| > 1$.
Moreover, Gamper et al.~\cite{Gam11} consider the extent to which some reachable edge can be passed in their definition of isochrones. This is relevant particularly for short ranges, or if the graph consists of edges with very long distance. Hence, we could modify our approaches to handle this slightly different definition.
From an aesthetic point of view, one could aim at extending our approaches to avoid very long straight segments in the range polygon. For example, an ocean at the boundary of the reachable area may correspond to a huge face in the planar representation of the graph. The resulting range polygon may contain segments reaching far into this face, deteriorating the visualization. This could be prevented, \eg, by adding dummy faces and edges to the graph. This may even decrease the running time the heuristic in Section~\ref{sec:heuristics:connected_components}, if the unreachable boundary becomes smaller (though, it lead only to a mild speedup in preliminary experiments).
On the other hand, one could also aim at further line simplification, at the cost of inexact results. However, such methods should avoid intersections between different components of the range polygon (\ie, maintain its topology), and error measures should consider the graph-based distance from the source to parts of the network that are incorrectly classified by the range polygon (since close vertices \wrt Euclidean distance may in fact have a long distance in the graph).
Finally, another interesting open problem is the consideration of continuous range visualization for a moving vehicle. Instead of computing the isocontours from scratch, one could try to reuse previously computed information.

\paragraph{Acknowledgements.}
We thank Roman Prutkin for interesting discussions.

\bibliographystyle{plain}
\bibliography{references}

\end{document}